\def\gG{\mathcal G}
\def\N{\mathbb N}
\def\atoms{\mathcal A}
\def\cneg#1{ #1^\perp}
\def\switch{\sigma}
\def\set#1{\{#1\}}
\def\Set#1{\begin{Bmatrix}#1\end{Bmatrix}}
\def\part#1{\left[#1\right]}
\def\block#1{\left(#1\right)}
\def\tuple#1{\left\langle#1\right\rangle}
\def\partset{\mathbb P}
\def\partof#1{\partset_{#1}}
\newcommand{\partofn}[1][n]{\partset_{#1}}
\def\MLL{\mathsf{MLL}}
\def\lpar{\parr}
\def\ltens{\otimes}
\def\axrule{\mathsf {ax}}
\def\cutr{\mathsf {cut}}
\def\mixr{\mathsf {mix}}
\newcommand{\vertices}[1][]{V_{#1}}
\newcommand{\uedge}[1][]{\mkern1mu\mathord{\stackrel{#1}{{\frown}}}\mkern1mu}
\def\con{\mathsf C}
\def\DNF{$\dnf$\xspace}
\def\CNF{$\cnf$\xspace}
\def\cnf{\mathsf{CNF}}
\def\dnf{\mathsf{DNF}}
\newcommand{\intset}[2]{\set{#1, \dots, #2}}
\newcommand{\sbp}[2]{{{\mathfrak B_{\tuple{#1,#2}}}}}
\newcommand{\psbp}[2]{\mathfrak B_{\tuple{#1,#2}}^\perp}
\newcommand{\gsbp}[2]{\mathfrak G_{\tuple{#1,#2}}}
\newcommand{\gsbpp}[2]{\mathfrak G_{\tuple{#1,#2}}^\perp}
\def\dperp{\mathord{\rotatebox[origin=c]{90}{$\mathrel{|}\joinrel \Relbar$}}}
\def\gircon{\mathsf G}
\newcommand{\linkbox}[2]{
	\tikz[overlay,remember picture]
	\draw[thick,rounded corners=2, densely dotted](#1.south west) 	rectangle (#2.north east);
}
\def\lmtens{\ltens_n}
\def\lmpar{\lpar_n}
\def\lmtensp{{\ltens}^\bullet_n}
\def\lmparp{{\lpar}^\bullet_n}
\def\basicil{{basic intersection link}\xspace}
\def\basicul{{basic union link}\xspace}
\def\basiculs{{basic union links}\xspace}
\def\unfold{expansion\xspace}
\def\hmod{\mathcal H}
\def\rmod{\mathcal R}
\def\cmod{\mathcal C}
\def\inp{\mathsf i}
\def\outp{\mathsf o}
\def\bord{\mathsf B}
\def\link{\ell}
\def\defn#1{\textbf{#1}}
\def\bordof#1{\bord_{#1}}
\def\inpof#1{{\mathsf I_{#1}}}
\def\outpof#1{{\mathsf O_{#1}}}
\def\swof#1{{\behof{#1}}}
\def\prs{\mathcal S}
\def\prn{\mathcal P}
\def\switchset#1{\Sigma(#1)}
\def\mod{\mathcal M}
\def\tmod{\mathcal T}
\def\behof#1{\behsym_{#1}}
\def\behsym{\mathfrak B}
\def\comp#1{\lseq_{#1}}
\def\parallel{\|}
\def\mllmod{{module}\xspace}
\def\mllps{{proof structure}\xspace}
\def\mllpss{{proof structures}\xspace}
\def\mllpn{{proof net}\xspace}
\def\mllpns{{proof nets}\xspace}
\def\mcomp{component\xspace}
\def\mcomps{components\xspace}
\def\mmod{module\xspace}
\def\mmods{modules\xspace}
\def\mstr{{multiplicative structure}\xspace}
\def\mstrs{{multiplicative structures}\xspace}
\def\mnet{{multiplicative net}\xspace}
\def\ms{\mstr}
\def\mss{\mstrs}
\def\mn{\mnet}
\def\transitory{\mathsf T}
\def\tracomp{$\transitory$-component\xspace}
\newcommand{\restr}[2]{{#1}_{|_{#2}}}
\newcommand{\restrp}[2]{({#1})_{|_{#2}}}
\def\wperp{\perp_w}
\def\sizeof#1{{|{#1}|}}
\def\widecneg#1{(#1)^\perp}
\def\ip{\mathsf{i}}
\def\op{\mathsf{o}}
\def\cyofn{\mathsf{Cy}_n}
\def\gH{\mathcal{H}}
\newcommand{\edge}[1][]{E_{#1}}
\newcommand{\hedge}[1][]{E_{#1}}
\def\hed{\mathsf h}
\def\iof#1{\mathsf{in}(#1)}
\def\oof#1{\mathsf{out}(#1)}
\def\bof#1{\bord(#1)}
\def\MLLx{\MLL^\circ}
\def\myparagraph#1{\noindent\textbf{#1.}}
\def\quand{\quad\mbox{and}\quad}
\def\gM{\mathcal M}
\def\gS{\mathcal S}
\theoremstyle{plain}
\newtheorem{theorem}{Theorem}
\newtheorem{proposition}[theorem]{Proposition}
\newtheorem{corollary}[theorem]{Corollary}
\theoremstyle{definition}
\newtheorem{definition}[theorem]{Definition}
\newtheorem{example}[theorem]{Example}
\newtheorem{nota}[theorem]{Notation}
\newtheorem{remark}[theorem]{Remark}
\def\prolog#1{\mathbf{#1}}
\def\colonminus{:-~}
\def\gluing#1#2#3{#1 \lseq_{#3} #2}
\def\vG{\vertices[\gG]}
\def\vH{\vertices[\gH]}
\title{Logic Programming with Multiplicative Structures}
\author{Matteo Acclavio\footnote{The author is supported by Villum Fonden, grant no. 50079.}
\institute{University of Sussex, Brighton (UK)\\
and\\University of Southern Denmark, Odense (DK)}
%\email{rvg@cs.stanford.edu}
\and
Roberto Maieli\footnote{The author is supported by INdAM-GNSAGA.}
\institute{Dipartimento di Matematica e Fisica
	 \\ Università Roma Tre \\ Roma, Italy}
%\email{\quad is@gmail.com \quad\qquad somebody@else.org}
}
\begin{document}
%%%%%%%%%%%%%%%%%%%%%%%%%%%%%%%%%%%%%%%%%%%%%%%%%%%%%%%%%%%%%%%
%%%%%%%%%%%%%%%%%%%%%%%%%%%%%%%%%%%%%%%%%%%%%%%%%%%%%%%%%%%%%%%
%%%%%%%%%%%%%%%%%%%%%%%%%%%%%%%%%%%%%%%%%%%%%%%%%%%%%%%%%%%%%%%
%%%%%%%%%%%%%%%%%%%%%%%%%%%%%%%%%%%%%%%%%%%%%%%%%%%%%%%%%%%%%%%
\maketitle

\begin{abstract}
    In the logic programming paradigm, a program is defined by a set of methods, each of which can be executed when specific conditions are met during the current state of an execution. The semantics of these programs can be elegantly represented using sequent calculi, in which each method is linked to an inference rule. In this context, proof search mirrors the program's execution.
	Previous works introduced a framework in which the process of constructing proof nets is employed to model executions, as opposed to the traditional approach of proof search in sequent calculus.
	
	This paper further extends this investigation by focussing on the pure multiplicative fragment of this framework. 
	We demonstrate, providing practical examples, 
	the capability to define {\em logic programming methods} with context-sensitive behaviors solely through specific resource-preserving and context-free operations, corresponding to certain generalized multiplicative connectives explored in existing literature. We show how some of these methods, although still multiplicative, escape the purely multiplicative fragment of Linear Logic (MLL, containing  only  $\lpar$ and $\ltens$).
\end{abstract}

%%%%%%%%%%%%%%%%%%%%%%%%%%%%%%%%%%%%%%%%%%%%%%%%%%
%%%%%%%%%%%%%%%%%%%%%%%%%%%%%%%%%%%%%%%%%%%%%%%%%%
%%%%%%%%%%%%%%%%%%%%%%%%%%%%%%%%%%%%%%%%%%%%%%%%%%
\section{Introduction}
%%%%%%%%%%%%%%%%%%%%%%%%%%%%%%%%%%%%%%%%%%%%%%%%%%
%%%%%%%%%%%%%%%%%%%%%%%%%%%%%%%%%%%%%%%%%%%%%%%%%%
%%%%%%%%%%%%%%%%%%%%%%%%%%%%%%%%%%%%%%%%%%%%%%%%%%

Proof theory provides various paradigms for interpreting computations as proofs and their transformations. The renowned Curry-Howard correspondence interprets proofs as programs, and proof reduction (i.e., cut-elimination) as program execution. 
This correspondence offers an elegant model for \emph{functional programming}, where the primary computation mechanism is substitution.
In this context, well-typed programs are expected to terminate their execution after computing results derived by the complete initial information.
However, this paradigm appears to face challenges in representing programs where the main computational mechanism is not substitution, as well as the ones characterized by non-termination, partial information, and strong concurrency (see, e.g., distributed systems, database servers, and microservices architectures).
By means of example, consider the way of modeling the Curry-Howard correspondence in the case of non-terminating programs, where 
we need to consider \emph{infinitary proof systems} to be able to represent infinite programs as \emph{non-wellfounded derivations}.
In these systems, even basic results like soundness, completeness, and cut-elimination require complex techniques \cite{bae:dou:sau:infinitary,Das2021,das:pous:non-well,acc:cur:gue:infext,sau:23,acc:cur:gue:infinitary}.
Therefore, it may appear more intuitive to interpret the rules of operational semantics for these programs as rules of a sequent system, and the program execution as the process of proof search \cite{miller:overview,miller1995survey,and:01,and:maz:03}.
This approach naturally handles issues related to partial information, the concurrency of rule applications, and the possibility of non-termination.

Alternatively, in the \emph{logic programming} paradigm, a program is provided by a set of \emph{methods}, which are elementary programs that can be executed when specific preconditions are met. 
The conventional proof-theoretical interpretation of logic programming associates a sequent of formulas with each state of the computation, and a sequent calculus rule with each program method.
This establishes an intuitive connection between program execution and the process of proof search within the calculus. 
In this context, a derivation tree where all leaves are axioms of the system represents a successfully completed computation.

It's worth noting that two forms of non-determinism arise during the process of proof search, corresponding to two distinct notions of non-determinism in program executions. 
Using the terminology from \cite{andreoli1992logic,hemer2002don,liang:hal-03457379},
The first type of non-determinism arises from the possibility of applying multiple methods to separate sub-sequents, which is a consequence of the limitations of sequent calculus%
\footnote{
    In sequent calculus, two independent rules which can be applied to a same sequent must be sequentialized because the syntax does not allow for the application of rules to portions of a sequent.
    At the same time, if proof search produces a branching, then the two branches of the proof search can be performed independently in a true concurrent way.
}%
. 
The second form of non-determinism is observed when different methods are applied to non-disjoint sequents.

In this paper we continue the investigation on the interpretation of logic programming based on linear logic \emph{\mllps expansion} instead of sequent calculus proof search, as in \cite{and:02,and:maz:03,fou:mog04,hae:fag:sol:07}.
In this approach, 
the set of inputs of a \mllps is interpreted as the current state of an execution, 
and 
the process of connecting new proof structures to its input (called \emph{expansion}) is interpreted as the application of a method.
The motivation to employ proof structures is due to their efficacy in capturing the non-determinism arising from the constraints of sequent calculus syntax:
the graphical syntax relieves us from the bureaucracy of rules permutations between independent rule applications. 
Additionally, proof structures offer a more flexible structure, enabling us to define methods corresponding to the expansion of multiple branches simultaneously.

%%%%%%%%%%%%%%%%%%%%%%%%%%%%%%%%%%%%%%%%%%%%%%%%%%
%%%%%%%%%%%%%%%%%%%%%%%%%%%%%%%%%%%%%%%%%%%%%%%%%%
\begin{figure}[t]
	\centering
	\noindent\resizebox{.9\textwidth}{!}{
		\begin{tabular}{|c|l|l|l|}
			\hline
			Paradigm
			&
			Curry-Howard
			&
			Logic programming
			&
			Logic programming
			\\
			&
			&
			(sequent calculus)
			&
			(multiplicative structures)
			\\
			\hline
			State
			&
			Proof
			&
			Proof
			&
			Multiplicative structure:
			\\
			&
			- with cuts;
			&
			- without cuts;
			&
			- (transitory) component;
			\\
			&
			- without proper axioms
			&
			- possibly with proper axioms
			&
			- possibly with inputs;
			\\
			&
			(i.e., without open premises);
			&
			(i.e., with open premises);
			&
			\\
			\hline
			Computation
			step
			&
			Proof reduction:
			&
			Proof construction:
			&
			Proof net expansion:
			\\
			& 
			Cut-elimination
			&
			Proper axioms elimination
			&
			Proof structure expansion
			\\
			\hline
			Final state
			&
			Cut-free proof
			&
			Derivation with closed branches
			&
			Structure without inputs
			\\
			\hline
			Program type
			&
			Formula
			&
			Formula
			&
			Network behavior
			\\
			\hline
		\end{tabular}    
	}
	\caption{
		A summary of the interpretation of proofs-as-programs in the paradigms of functional programming, and logic programming using sequent calculus and using proof net expansion.
	}
	\label{fig:paradigms}
\end{figure}
%%%%%%%%%%%%%%%%%%%%%%%%%%%%%%%%%%%%%%%%%%%%%%%%%%
%%%%%%%%%%%%%%%%%%%%%%%%%%%%%%%%%%%%%%%%%%%%%%%%%%

%%%%%%%%%%%%%%%%%%%%%%%%%%%%%%%%%%%%%%%%%%%%%%%%%%
\myparagraph{Contributions of the paper}
%%%%%%%%%%%%%%%%%%%%%%%%%%%%%%%%%%%%%%%%%%%%%%%%%%
%
In this paper, we study a logic programming framework built upon linear logic proof structures, offering a generalization of the standard \emph{$\MLL$-proof structure} \cite{gir:ll} and the {focused bipolar proof structures} \cite{and:maz:03}. 
We focus on the multiplicative fragment of this framework, where the Danos-Regnier \emph{correctness criterion} \cite{dan:reg:89} can be easily generalized.

We introduce the concept of a \emph{\mcomp} as an acyclic multiplicative structure where each of its part can interact with a context, analogous to the notion of an \emph{open derivation} in sequent calculus. 
After establishing the topological conditions necessary for ensuring the composability of components, we proceed to define the foundational blocks of a logic programming framework based in the expansion of proof structures.
Within this framework, as  main novelty, we offer a computational interpretation of a specific family of \emph{generalized multiplicative connectives}, which are connectives provided with linear and context-free introduction rules proposed in early works on linear logic \cite{dan:reg:89,gir:meanII}, 
but which lacked of any concrete computational interpretation prior to this work.
We conclude by illustrating methods, defined within a linear and context-free setting, whose behavior is ``locally additive'' (in the sense of \cite{gir:meanII}), which cannot be expressed using the conventional $\MLL$ connectives $\lpar$ and $\ltens$ (see \cite{mai:14,acc:mai:20, mai:22}).

%%%%%%%%%%%%%%%%%%%%%%%%%%%%%%%%%%%%%%%%%%%%%%%%%%
\myparagraph{Outline of the paper}
%%%%%%%%%%%%%%%%%%%%%%%%%%%%%%%%%%%%%%%%%%%%%%%%%%
%
In \Cref{sec:back} we recall some notions on hypergraphs, partitions and the definition of multiplicative linear logic and its proof nets.
In \Cref{subsec:genCon} we recall the definition and results from \cite{mai:14,acc:mai:20} on the generalized multiplicative connectives (theorized in \cite{dan:reg:89,gir:meanII}) we use in this paper.
In \Cref{sec:logProg} we give an overview of the way logic programming program executions are represented in using sequent calculi and how this paradigm has been extended in \cite{and:maz:03} to proof net expansion.
In \Cref{sec:gen} we define our framework by extending the definition of proof structures, and proving the results about their compositionality.
In \Cref{sec:examples} we provide a computational interpretation of certain generalized multiplicative connectives in our framework.
We show that these connectives, beside being linear and context-free operators, are still able to capture non-linear and context-sensitive behaviors.

%%%%%%%%%%%%%%%%%%%%%%%%%%%%%%%%%%%%%%%%%%%%%%%%%%
%%%%%%%%%%%%%%%%%%%%%%%%%%%%%%%%%%%%%%%%%%%%%%%%%%
%%%%%%%%%%%%%%%%%%%%%%%%%%%%%%%%%%%%%%%%%%%%%%%%%%
\section{Preliminary Notions}\label{sec:back}
%%%%%%%%%%%%%%%%%%%%%%%%%%%%%%%%%%%%%%%%%%%%%%%%%%
%%%%%%%%%%%%%%%%%%%%%%%%%%%%%%%%%%%%%%%%%%%%%%%%%%
%%%%%%%%%%%%%%%%%%%%%%%%%%%%%%%%%%%%%%%%%%%%%%%%%%

In this section we recall basic definitions for hypergraphs and partitions we use in this paper.
We then recall the definition of multiplicative linear logic and the syntax of its proof nets.

%%%%%%%%%%%%%%%%%%%%%%%%%%%%%%%%%%%%%%%%%%%%%%%%%%
%%%%%%%%%%%%%%%%%%%%%%%%%%%%%%%%%%%%%%%%%%%%%%%%%%
\subsection{Hypergraphs}\label{sec:hyp}
%%%%%%%%%%%%%%%%%%%%%%%%%%%%%%%%%%%%%%%%%%%%%%%%%%
%%%%%%%%%%%%%%%%%%%%%%%%%%%%%%%%%%%%%%%%%%%%%%%%%%

%\begin{definition}
	A \defn{hypergraph} $\gG=\tuple{\vertices[\gG],\hedge[\gG]}$ 
	is given by 
	a set of \defn{vertices} $\vertices[\gG]$ and 
	a set of \defn{hyperedges} $\hedge[\gG]$, that is, a set of pairs of list of vertices in $\vertices[\gG]$.
	In a hyperedge $\hed=\tuple{\iof\hed,\oof\hed}\in\hedge[\gG]$ we call the vertices occurring in $\iof\hed$ (resp.~ in $\oof\hed$) the \defn{inputs} (resp.~the \defn{outputs}) of $\hed$
	and
	we define the \defn{border} $\bof\hed$
	as the multiset of vertices occurring in $\iof\hed$ and in $\oof\hed$.
	A \defn{sub-hypergraph} of a hypergraph $\gG=\tuple{\vertices[\gG],\hedge[\gG]}$  is a hypergraph $\gG'\tuple{\vertices[\gG'],\hedge[\gG']}$ such that $\vertices[\gG']\subseteq\vertices[\gG]$ and $\hedge[\gG']\subseteq\set{\hed\in \hedge[\gG]\mid \bof{\hed}\subseteq\vertices[\gG']}$.
	
	An \defn{input} (resp.~\defn{output}) of the hypergraph $\gG$ is a vertex which does not occur as output (resp.~input) of any hyperedge of $\gG$.
	We denote by $\inpof\gG$ (resp.~$\outpof\gG$) the set of inputs (resp.~outputs) of $\gG$, and we define the \defn{border} of $\gG$ as the set of its inputs and the outputs, that is, $\bordof{\gG}=\inpof\gG\cup\outpof\gG$.
	A hypergraph is \defn{linear} if each vertex occurs at most once as an input of a hyperlink and as an output of another link%
	\footnote{
		In works on hypergraphs with interfaces (e.g., \cite{bonchi2016}), this property is referred to as \emph{linearity} or \emph{monogamicity}.
		Note that, by definition, no vertex can occur at the same time as an input and an output of a linear hyperedge.
	}%
	.
	
	In a hypergraph $\gG$, a \defn{path} (of length $n$) from $x\in \vertices[\gG]$ to $y\in\vertices[\gG]$ 
	is an alternating list of vertices and hyperedges
	of the form
	$x=v_0 \hed_1 v_1 \cdots \hed_n v_n=y$
	such that 
	$v_{i-1}=\oof{\hed_i}$ and $v_i=\iof{\hed_i}$ 
	for all $i\in \intset1n$; in this case we say that $x$ is connected to $y$.
	A \defn{cycle} is a path with $\hed_1=\hed_n$, or $n>0$ and $v_0=v_n$; 
	it is \defn{elementary} if there are no $i$ and $j$ such that $i\neq j$ and $v_i=v_j$ or $\hed_i\neq \hed_j $.
	A hypergraph is \defn{acyclic} if it contains no elementary cycles.
%\end{definition}

%\begin{definition}
	An \defn{undirected hypergraph} 
	$\gG=\tuple{\vertices[\gG],\hedge[\gG]}$
	is given by 
	a set of \defn{vertices} $\vertices[\gG]$ and 
	a set of \defn{undirected hyperedges} $\hedge[\gG]$, that is, a set of subset of vertices in $\vertices[\gG]$.
	A \defn{graph} is an undirected hypergraph in which each hyperedge is an \defn{edge}, that is, a set of two vertices $\set{v_1,v_2}$.
	\defn{Paths} and \defn{cycles} in an undirected hypergraph are defined analogously to the ones in hypergraph.
	Two vertices are \defn{connected} if there is a path from one to the other.
	A \defn{connected component} of an undirected hypergraph is a maximal subset of pairwise connected vertices.
	The \defn{undirected hypergraph} associated to a linear hypergraph $\gG=\tuple{\vertices[\gG],\hedge[\gG]}$ is defined as the undirected hypergraph
	$\tuple{\vertices[\gG],\set{\bof\hed\mid\hed\in\hedge[\gG]}}$.
	We say that a hypergraph $\gG$ is \defn{connected} if the undirected hypergraph associated to it is connected.
%\end{definition}

\begin{nota}
	We drawing hyperedges with inputs on top and output on the bottom. 
	By convention, we enumerate the inputs and outputs from left to right.
\end{nota}

\begin{definition}\label{def:seqcomp}
    Let $\gG=\tuple{\vertices[\gG],\hedge[\gG]}$ and $\gH=\tuple{\vertices[\gH],\hedge[\gH]}$ be two hypergraphs with disjoint sets of vertices.
	The \defn{disjoint union} of $\gG$ and $\gH$ (denoted $\gG\parallel\gH $) is defined as the union of vertices and hyperedges of $\gG$ and $\gH$
	that is, 
	$\gG\parallel\gH 
	\coloneqq 
	\tuple{\vG\cup\vH,\hedge[\gG]\cup\hedge[\gH]}$.
	Note that in defining $\gG\parallel \gH$ we always assume $\gG$ and $\gH$ having disjoint sets of nodes.
	An \defn{(linear) interface} $X=(g,h)$ is a pair of bijections from a finite set $\intset1n$ to $\vG$ and $\vH$ respectively.
	We define the \defn{composition} of $\gG$ and $\gH$ via an interface $X$ 
	as the hypergraph $\gluing\gG\gH X$ 
	obtained by identifying in $\gG\parallel\gH$ the vertices $g(i)$ and $h(i)$ for each $i\in\intset1n$, as shown on the left of \Cref{eq:interface}.
	\gdef\vwi#1{\mathord{%
			\tikz[remember picture,baseline=(w#1.base)]%
			\node[inner sep=0pt](w#1){$w_#1\strut$};
	}}
	\gdef\vvi#1{\mathord{%
		\tikz[remember picture,baseline=(v#1.base)]%
		\node[inner sep=0pt](v#1){$v_#1\strut$};
	}}
	\begin{equation}\label{eq:interface}
	\adjustbox{max width=.92\textwidth}{$
		\begin{array}{c|c|c||c||c|c|c}
			\gG & \gH & X=(g,h) &
			\gluing\gG\gH X \quad\mbox{or}\quad \gG\lseq\gH 
			& X &\gG & \gH
			\\
			\begin{array}{cc}
				\vvi1 \;\; \vvi2 & \vvi3
				\\[5pt]
				\vmod1{\;\;\hed_1\;\;}
				\\[5pt]
				&\vmod2{\;\;\hed_2\;\;}
				\\[5pt]
				\vvi4&\vvi5 \;\vvi6
			\end{array}
			\pswires{v1/mod1.130,v2/mod1.50,mod1.310/mod2.130,v3/mod2.50,mod1.230/v4,mod2.230/v5,mod2.310/v6}
			&
			\begin{array}{c}
				\vwi1 \; \vwi2 \; \vwi3
				\\[5pt]
				\vmod3{\hed_3}
				\\[5pt]
				\vmod4{\quad\hed_4\quad}
				\\[5pt]
				\vwi4
			\end{array}
			\pswires{w1/mod4.150,w2/mod3,w3/mod4.40,mod3/mod4,mod4/w4}
			&
			\begin{array}{c|c}
				g(1)=v_5
				&
				h(1)=w_2
				\\
				g(2)=v_6
				&
				h(2)=w_3
			\end{array}
			&
			\begin{array}{c}
				\vwi1\; \vvi1 \;\; \vvi2 \quad \;\vvi3
				\\[5pt]
				\vmod1{\;\;\hed_1\;\;}\quad
				\\[5pt]
				\quad\vmod2{\;\;\hed_2\;\;}
				\\[5pt]
				\vmod3{\hed_3}
				\\[5pt]
				\vmod4{\quad\hed_4\quad}
				\\[5pt]
				\vvi4 \qquad \vwi4\qquad
			\end{array}
			\pswires{v1/mod1.130,v2/mod1.50,mod1.310/mod2.130,v3/mod2.50,mod1.230/v4}
			\pswires{w1/mod4.150,mod3/mod4,mod4/w4}
			\pswires{mod2.230/mod3,mod2.310/mod4.40}
			&
			\Set{a,b}
			&
			\begin{array}{cc}
				\vvi1 \;\; \vvi2 & \vvi3
				\\[5pt]
				\vmod1{\;\;\hed_1\;\;}
				\\[5pt]
				&\vmod2{\;\;\hed_2\;\;}
				\\[5pt]
				\vvi4&\va5 \quad \vb6
			\end{array}
			\pswires{v1/mod1.130,v2/mod1.50,mod1.310/mod2.130,v3/mod2.50,mod1.230/v4,mod2.230/a5,mod2.310/b6}
			&
			\begin{array}{c}
				\vwi1 \quad \va2 \quad \vb3
				\\[5pt]
				\vmod3{\hed_3}
				\\[5pt]
				\vmod4{\quad\hed_4\quad}
				\\[5pt]
				\vwi4
			\end{array}
			\pswires{w1/mod4.150,a2/mod3,b3/mod4.40,mod3/mod4,mod4/w4}
		\end{array}
	$}
	\end{equation}
	To improve readability, 
	we may simply identify the vertices $g(i)$ and $h(i)$ for all $i\in\intset1n$, therefore considering $X$ as a set of vertices in $\vertices[\gG]\cap\vertices[\gH]$
	and simply writing
	$\gG\lseq \gH$, as shown in the right of \Cref{eq:interface}.
\end{definition}

%%%%%%%%%%%%%%%%%%%%%%%%%%%%%%%%%%%%%%%%%%%%%%%%%%
%%%%%%%%%%%%%%%%%%%%%%%%%%%%%%%%%%%%%%%%%%%%%%%%%%
\subsection{Partitions}\label{subsec:back:part}
%%%%%%%%%%%%%%%%%%%%%%%%%%%%%%%%%%%%%%%%%%%%%%%%%%
%%%%%%%%%%%%%%%%%%%%%%%%%%%%%%%%%%%%%%%%%%%%%%%%%%

Given a set $X$, 
a \defn{partition} of $X$ is a set of disjoint subsets of $X$ (we call each a \defn{block}) such that their union is $X$.
In order to improve readability, when writing sets of partitions, in which three parentheses are nested inside each other, 
even if blocks and partitions are sets (not permutations, nor multisets),
we use parentheses $\block{-}$ to denote blocks (subsets of $X$), and square brackets $\part{-}$ to denote partitions (sets of subsets of $X$).
For example, we write $\part{\block{1,3},\block 2}$ to denote 
the partition of the set $\set{1,2,3}$ with one block containing $1$ and $3$ and one block containing only $2$.
We denote by $\partof X$ (resp.~$\partofn$) the set of partitions over $X$ (resp.~over $\intset{1}{n}$).
If $p\in \partof X$ and $Y\subset X$, we define the \defn{restriction of $p$ on $Y$} as the partition $\restr p Y\in \partof Y$ such that $x,y\in Y$ belongs to the same block $\restr \gamma Y \in \restr p Y$ iff $x$ and $y$ belongs in a same block $\gamma\in p$.
By means of example, if $p= \part{\block{1,3,4},\block{2,5}, \block 6}\in \partofn[6]$, 
then $\restr p {\set{1,2,3}}=\part{\block{1,3},\block 2}$.

%%%%%%%%%%%%%%%%%%%%%%%%%%%%%%%%%%%%%%%%%%%%%%%%%%%
\begin{definition}[Orthogonality \cite{dan:reg:89}]\label{def:ort}
	Let $X$ be a set and $p,q\in\partof X$.
	We define \defn{graph of incidence of $p$ and $q$} 
	as the graph $\gG(p,q)$ with 
	vertices the blocks in $p$ and in $q$
	and with
	an edge between a block $\gamma_p\in p$ and a block in $\gamma_q\in q$ for each $i\in \gamma_p\cap \gamma_q$ (see examples in \Cref{eq:orth}).
	That is, the graph $\gG(p,q)=\tuple{\vertices[\gG(p,q)], \edge[\gG(p,q)]}$ has set of vertices and edges respectively
	$$
	\vertices[\gG(p,q)]=\set{\gamma\mid \gamma \in p\mbox{ or } \gamma \in q}
	\quand
	\edge[\gG(p,q)]=\set{\set{\gamma^p_i,\gamma^q_i} \mid \gamma^p_i\in p \mbox{ and }  \gamma^q_i\in q \mbox{ and } i \in \gamma^p_i \cap \gamma^q_i}
	\;.
	$$
	We say that $p,q\in \partof X$ are \defn{weakly orthogonal}, denoted $p\wperp q$,
	if their graph of incidence  $\gG(p,q)$ is acyclic.
	They are \defn{orthogonal}, denoted $p\perp q$, 
	if their graph of incidence is connected and acyclic.
	
	The notion of weak orthogonality and orthogonality extends to sets of partitions:
	if $P,Q\subset \partof X$ then $P\wperp Q$ 
	(respectively $P\perp Q$) 
	if $p\wperp q$ (respectively $p\perp q$) for all $p\in P$ and for all $q\in Q$.
	The \defn{orthogonal set} of a set of partitions  $P\subset \partofn$ is defined as $P^\perp=\set{q\in \partofn \mid p \perp q\mbox{ for all }p\in P }$.
	We write $P \dperp Q$ if $P\perp Q$ and $P^\perp\perp Q^\perp$.
\end{definition}
%%%%%%%%%%%%%%%%%%%%%%%%%%%%%%%%%%%%%%%%%%%%%%%%%%%

%%%%%%%%%%%%%%%%%%%%%%%%%%%%%%%%%%%%%%%%%%%%%%%%%%%
\begin{example}
	Consider 
	the partitions 
	$p=\part{\block{1,2},\block 3}$, $q_1=\part{\block{1,2,3}}$,
	$q_2=\part{\block{1,3},\block 2}$
	$q_3=\part{\block{1},\block{2,3}}$ and $q_4=\part{\block 1, \block 2, \block 3}$.
	We have that
	$p\not\perp q_1$,
	$p\perp q_2$,
	$p\perp q_3$,
	$p\wperp q_4$
	because

	\begin{equation}\label{eq:orth}
	\adjustbox{max width=.9\textwidth}{$
	\begin{array}{c@{\quad}|@{\quad}c@{\quad}|@{\quad}c@{\quad}|@{\quad}c}
		\begin{array}{ccc}
			\vblock1{\block{1,2}}&&\vblock2{\block{3}}
			\\[1em]
			&\vblock3{\block{1,2,3}}
		\end{array}
		\Gedges{block2/block3}
		\bentGedges{block1/block3/20,block1/block3/-20}
		& 
		\begin{array}{c@{\qquad}c}
			\vblock1{\block{1,2}}&\vblock2{\block{3}}
			\\[1em]
			\vblock3{\block{1,3}}&\vblock4{\block{2}}
		\end{array}
		\Gedges{block1/block3,block1/block4,block2/block3}
		& 
		\begin{array}{c@{\qquad}c}
			\vblock1{\block{1,2}}&\vblock2{\block{3}}
			\\[1em]
			\vblock3{\block{1}}&\vblock4{\block{2,3}}
		\end{array}
		\Gedges{block1/block3,block1/block4,block2/block4}
		& 
		\begin{array}{ccc}
			\vblock1{\block{1,2}}&&\vblock2{\block{3}}
			\\[1em]
			\vblock3{\block{1}}&\vblock4{\block{2}}&\vblock5{\block{3}}
		\end{array}
		\Gedges{block1/block3,block1/block4,block2/block5}
		\\
		\mbox{is cyclic}
		&
		\mbox{is connected and acyclic}
		&
		\mbox{is connected and acyclic}
		&
		\mbox{is acyclic.}
	\end{array}
	$}
\end{equation}
\end{example}
%%%%%%%%%%%%%%%%%%%%%%%%%%%%%%%%%%%%%%%%%%%%%%%%%%%

%%%%%%%%%%%%%%%%%%%%%%%%%%%%%%%%%%%%%%%%%%%%%%%%%%
%%%%%%%%%%%%%%%%%%%%%%%%%%%%%%%%%%%%%%%%%%%%%%%%%%
%%%%%%%%%%%%%%%%%%%%%%%%%%%%%%%%%%%%%%%%%%%%%%%%%%
%%%%%%%%%%%%%%%%%%%%%%%%%%%%%%%%%%%%%%%%%%%%%%%%%%
\subsection{Multiplicative Linear Logic}\label{sec:MLL}
%%%%%%%%%%%%%%%%%%%%%%%%%%%%%%%%%%%%%%%%%%%%%%%%%%
%%%%%%%%%%%%%%%%%%%%%%%%%%%%%%%%%%%%%%%%%%%%%%%%%%
%%%%%%%%%%%%%%%%%%%%%%%%%%%%%%%%%%%%%%%%%%%%%%%%%%
%%%%%%%%%%%%%%%%%%%%%%%%%%%%%%%%%%%%%%%%%%%%%%%%%%

Multiplicative linear logic formulas are generated from a countable set $\atoms=\set{a, b, \dots}$ of \defn{propositional variables}  by the following grammar:
$$A,B::= a \mid \cneg A \mid A\lpar B \mid A\ltens B$$
We consider formulas modulo the \defn{involutivity of the negation} $A^{\perp\perp}=A$ and
the \defn{De Morgan} laws 
$\widecneg{A\ltens B}=\cneg A \lpar \cneg B$ and $\widecneg{A\lpar B}=\cneg A \ltens \cneg B$.
%We may denote by $\lpar_n$ and $\ltens_n$ the $n$-ary version of $\ltens$ and $\lpar$ respectively.
%

A \defn{sequent} is a set of occurrences of formulas 
(as in, e.g., \cite{bae:dou:sau:infinitary,acc:cur:gue:infinitary}).
The \defn{sequent systems} $\MLL=\set{\axrule, \lpar, \ltens}$ and $\MLLx=\MLL\cup\set{\mixr}$ are defined using the rules in \Cref{fig:mllrules}%
\footnote{
	In the figure we include the rule $\cutr$ required to define compositionality of proofs via \emph{modus ponens}.
	We do not include it in the definition of $\MLL$ and $\MLLx$
	since 
	this rule is proven to be admissible \cite{gir:ll}
	and 
	it plays no role in the framework we are presenting in this paper.
}. 
We call \defn{active} (resp.~\defn{principal}) a formula occurrence in one of the premises (resp.~in the conclusion) of a rule, not occurring in conclusion (resp.~in any of its premises). 

%%%%%%%%%%%%%%%%%%%%%%%%%%%%%%%%%%%%%%%%%%%%%%%%%%
\begin{figure}[t!]
	\centering
	\begin{tabular}{c@{\qquad}c@{\qquad}c@{\qquad}|@{\qquad}c@{\qquad}|@{\qquad}c}
		$\vlinf{}{\axrule}{\vdash a, \cneg a}{}$
		&
		$\vlinf{}{\lpar}{ \vdash\Gamma, A \lpar B}{ \vdash\Gamma, A , B}$
		&
		$\vliinf{}{\ltens}{ \vdash \Gamma, A \ltens B, \Delta}{\vdash\Gamma, A}{ \vdash B , \Delta}$
		&
		$\vliinf{}{\mixr}{ \vdash\Gamma, \Delta}{\vdash \Gamma}{\vdash\Delta}$
		&
		$\vliinf{}{\cutr}{ \vdash \Gamma, \Delta}{ \vdash\Gamma, A}{ \vdash\cneg A , \Delta}$
	\end{tabular}
	\caption{Sequent calculus rules for the multiplicative linear logic,
		and rules $\mixr$ and $\cutr$.}
	\label{fig:mllrules}
\end{figure}
%%%%%%%%%%%%%%%%%%%%%%%%%%%%%%%%%%%%%%%%%%%%%%%%%%

%%%%%%%%%%%%%%%%%%%%%%%%%%%%%%%%%%%%%%%%%%%%%%%%%%%
%%%%%%%%%%%%%%%%%%%%%%%%%%%%%%%%%%%%%%%%%%%%%%%%%%%
%%%%%%%%%%%%%%%%%%%%%%%%%%%%%%%%%%%%%%%%%%%%%%%%%%%
\paragraph*{Multiplicative Proof Nets}%\label{subsec:PN}
%%%%%%%%%%%%%%%%%%%%%%%%%%%%%%%%%%%%%%%%%%%%%%%%%%%
%%%%%%%%%%%%%%%%%%%%%%%%%%%%%%%%%%%%%%%%%%%%%%%%%%%
%%%%%%%%%%%%%%%%%%%%%%%%%%%%%%%%%%%%%%%%%%%%%%%%%%%

\emph{Proof nets} are a graphical syntax for multiplicative linear logic proofs capturing the proof equivalence generated by independent rules permutations (see, e.g., \cite{gir87,hug:inv,hei:hou:noPN,hei:hug:MALL,acc:str:fromto,ICP,abr-mai-2019,acc:str:AIML22}).
%
%
%%%%%%%%%%%%%%%%%%%%%%%%%%%%%%%%%%%%%%%%%%%%%%%%%%
%\begin{definition}
    A \defn{proof structure} $\prs=\tuple{\vertices[\prs],\hedge[\prs]}$ is a  hypergraph 
    whose vertices are labeled by $\MLL$-formulas and whose hyperedges (called links) are labeled by rules in $\MLL$ (such labels are called \defn{types}) in such a way the following local constraints are respected:
	\begin{equation}\label{eq:MLLlinks}
		\adjustbox{max width=.9\textwidth}{$\begin{array}{c@{\qquad}c@{\qquad}c@{\qquad}c}
			\begin{array}{c@{\qquad}c@{\qquad}c}
				\\
				\va1 &&\vna1
			\end{array}
			\psaxioms{a1/na1}
			&
			\begin{array}{c@{\qquad}c@{\qquad}c}
				\vA1 &&\vB1\\[1em]
				&\Gtens1\\
				& \vfor1{A\ltens B}
			\end{array}
			\pswires{A1/Gtens1.L,B1/Gtens1.R,Gtens1.O/for1}
			&
			\begin{array}{c@{\qquad}c@{\qquad}c}
				\vA1 &&\vB1\\[1em]
				&\Gpar1\\
				& \vfor1{A\lpar B}
			\end{array}
			\pswires{A1/Gpar1.L,B1/Gpar1.R,Gpar1.O/for1}
		\end{array}$}
	\end{equation}
with $A$ and $B$ formulas and $a\in\atoms$.
%%%%%%%%%%%%%%%%%%%%%%%%%%%%%%%%%%%%%%%%%%%%%%%%%%

Since we are considering a sequent as a set of occurrences of
formulas, it is possible to easily trace formula occurrences in a
derivation, defining a proof structure that encodes a given
derivation.

%%%%%%%%%%%%%%%%%%%%%%%%%%%%%%%%%%%%%%%%%%%%%%%%%%
\begin{definition}
	Let $\pi$ be a derivation in $\MLL$.
	We define the \mllps \defn{representing} $\pi$ as the \mllps $\prn_\pi$ having
	a vertex for each occurrence of an active formulas of a rule in $\pi$,
	and
	a link of type $\rho$ with inputs (resp.~with output) the vertices corresponding to the active formulas (resp.~the principal formula) for each occurrence of a rule $\rho$ in $\pi$.
	A \defn{\mllpn} is a \mllps $\prs=\prn_\pi$ representing a derivation $\pi$ in $\MLL$.
\end{definition}
%%%%%%%%%%%%%%%%%%%%%%%%%%%%%%%%%%%%%%%%%%%%%%%%%%

By definition not all \mllpss are \mllpns.
For this reason, 
a \defn{correctness criterion}, that is, 
a topological characterization of 
those \mllpss which are \mllpns,
is needed.
Beside various criteria have been developed in the literature \cite{gir:ll,dan:reg:89,retore:phd,gue:99}, 
we here report the so-called \defn{Danos-Regnier criterion} (or \defn{DR-criterion} for short), which is the most relevant to our purposes.

%%%%%%%%%%%%%%%%%%%%%%%%%%%%%%%%%%%%%%%%%%%%%%%%%%
\def\DRcor{DR-correct\xspace}
\begin{definition}\label{def:netCor}
	Let $\gM$ be a \mllps.
	A \defn{test} for $\gM$ is 
	the undirected hypergraph with the same vertices of $\gM$
	having 
	a hyperedge $\set{a,\cneg a}$ for each $\axrule$-link $(\emptyset,\tuple{a,\cneg a})$,
	a hyperedge $\set{A,B,A\ltens B}$ for each $\ltens$-link $(\tuple{A,B},\tuple{A\ltens B})$, and 
	either one edge $\set{A,A\lpar B}$ or one edge  $\set{B,A\lpar B}$ for each $\lpar$-link $(\tuple{A,B},\tuple{A\lpar B})$.
	The \mllps $\gM$ is \defn{\DRcor} if it has no inputs and if all of its tests are connected and acyclic (undirected) hypergraph.
\end{definition}

\begin{theorem}[\cite{dan:reg:89}]\label{thm:DRcorr}
	A \mllps $\prs$ is a \mllpn iff $\prs$ is \DRcor.
\end{theorem}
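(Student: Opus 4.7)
The plan is the standard two-direction proof of the Danos--Regnier criterion.

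For the soundness direction ($\Rightarrow$), I would proceed by induction on the size of a derivation $\pi$ such that $\prs = \prn_\pi$. The base case is a single $\axrule$-link: every test is the single edge $\{a,\cneg a\}$, which is connected and acyclic, and the structure has no inputs. For the inductive step, I consider the last rule applied in $\pi$:
\begin{itemize}
\item If the last rule is $\lpar$, then $\prn_\pi$ is obtained from $\prn_{\pi'}$ by adding one $\lpar$-link with new output vertex $A \lpar B$. Each test of $\prn_\pi$ restricts to a test of $\prn_{\pi'}$ with one extra pendant edge attached at $A \lpar B$ (via either $A$ or $B$). Attaching a pendant edge to a connected acyclic hypergraph preserves connectedness and acyclicity, and no new input is created since $A$ and $B$ now occur as inputs of the new link.
\item If the last rule is $\ltens$, then $\prn_\pi$ is the disjoint union of $\prn_{\pi_1}$ and $\prn_{\pi_2}$ together with a single $\ltens$-hyperedge $\{A,B,A\ltens B\}$. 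A test of $\prn_\pi$ is precisely a test of $\prn_{\pi_1}$ disjoint-union a test of $\prn_{\pi_2}$ glued through this hyperedge on the two vertices $A$ and $B$: gluing two connected acyclic hypergraphs through a single hyperedge touching exactly one vertex in each yields a connected acyclic hypergraph.
\end{itemize}

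For the sequentialization direction ($\Leftarrow$), I would induct on the number of links of $\prs$. The base case is a structure with a single $\axrule$-link, which obviously represents the derivation $\vlinf{}{\axrule}{\vdash a,\cneg a}{}$. For the inductive step, I distinguish two cases:
\begin{itemize}
\item If $\prs$ contains a $\lpar$-link $\ell$ whose output $A\lpar B$ is an output of $\prs$, I remove $\ell$ (keeping $A$ and $B$ as new outputs) and verify that the resulting $\prs'$ is still DR-correct: every test of $\prs'$ extends to a test of $\prs$ by choosing, say, the edge $\{A, A\lpar B\}$, and the effect on connectedness/acyclicity is only the removal of one pendant edge. By induction, $\prs'$ represents some $\pi'$, to which I append a $\lpar$-rule.
\item Otherwise, I invoke the \emph{splitting tensor lemma}: any DR-correct $\prs$ containing at least one $\ltens$-link, and whose outputs are not principal for any $\lpar$-link, contains a $\ltens$-link whose removal disconnects $\prs$ into two disjoint DR-correct substructures $\prs_1$ and $\prs_2$. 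I then apply induction to obtain derivations $\pi_1$ and $\pi_2$ and conclude with a $\ltens$-rule.
\end{itemize}

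The main obstacle is the splitting tensor lemma, which is the classical hard core of sequentialization. I would prove it by the empire/kingdom technique: for a $\ltens$-link $\ell$ with inputs $A,B$ and output $A\ltens B$, define the empire of $A$ as the maximal sub-proof-structure containing $A$ as an output and remaining DR-correct; then show that among $\ltens$-links not dominated by another, one is splitting, by using acyclicity and connectedness of tests to derive a contradiction from the assumption that no $\ltens$-link splits. Once splitting is established, the rest of the argument is a clean induction.
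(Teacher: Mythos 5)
The paper does not prove this statement: it is imported by citation from Danos--Regnier, so there is no in-paper argument to compare against. Your outline is the classical sequentialization proof, and it is sound in its overall structure: the soundness direction by induction on the derivation is complete and correct as written (pendant edge for $\lpar$, gluing through a single hyperedge for $\ltens$), and the converse direction correctly reduces the whole difficulty to the splitting tensor lemma, which you rightly identify as the hard core and attack by the standard empire/kingdom maximality argument.

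Two small points of precision. First, your statement of the splitting lemma omits that the splitting $\ltens$-link must be \emph{terminal} (its output $A\ltens B$ must be an output of $\prs$); without this you cannot close the induction by making $\ltens$ the last rule of the reconstructed derivation. The standard lemma is exactly ``if a DR-correct structure has no terminal $\lpar$-link and at least one $\ltens$-link, then some \emph{terminal} $\ltens$-link splits it,'' and the empire argument should be run over terminal tensors only. Second, your case analysis should explicitly dispose of the situation where there is no terminal $\lpar$-link and no $\ltens$-link at all: then every link is an $\axrule$-link (any $\lpar$-link would have a topmost, hence terminal, representative), and connectedness of the tests forces $\prs$ to be a single axiom link, i.e.\ the base case. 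Neither issue is a conceptual gap --- both are routine to repair --- but the proof of the splitting lemma itself remains a substantial piece of work that your sketch defers rather than discharges.
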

%%%%%%%%%%%%%%%%%%%%%%%%%%%%%%%%%%%%%%%%%%%%%%%%%%

It is worth noticing that by dropping connectedness condition in \Cref{def:netCor}, we obtain a notion of correctness for $\MLLx$-\mllpn, 
that is, 
if any test of a \mllps $\gM$ is an acyclic hypergraph, 
then we can associate to 
$\gM$ a derivation in $\MLLx$.

\begin{definition}
	A \defn{module} is a \mllps which is a connected sub-hypergraph of a \mllpn.
\end{definition}

%%%%%%%%%%%%%%%%%%%%%%%%%%%%%%%%%%%%%%%%%%%%%%%%%%%
\begin{remark}[Definitions of module in the literature]\label{rem:mod}
	\def\drmodule{$\mathsf {DR}$-module\xspace}
	\def\drmodules{$\mathsf {DR}$-modules\xspace}
	\def\ply{\mathsf Y}
	
	The definition of module we consider in this paper differs from the definition of module given 
	in \cite{dan:reg:89} 
	where a module 
	is defined as a pair $\tuple{\prs, \ply}$
	with $\prs$ a  proof structure 
	such that $\switch(\prs)$ is acyclic for all $\switch\in \switchset{\prs}$, 
	and 
	a subset of its border $\ply\subseteq\bordof\prs$.
\end{remark}
\begin{figure}[t]
\centering
\adjustbox{max width=\textwidth}{
	\begin{tabular}{|c|c|c|}
		\hline
			$\begin{array}{ccc@{\qquad}cccc@{\qquad}cccc}
				\\[3em]
				&\Gpar1&&&&\Gtens1&&&&\Gtens2\\
				&\,\, \vA1_1&&&&\,\,\vA2_2&&&&\,\,\vA3_3
				\\[.5em]
			\end{array}
			\psaxioms{Gpar1.R/Gtens1.L,Gtens1.R/Gtens2.L}
			\psaxiom{Gpar1.L}{Gtens2.R}{.75}{}
			\pswires{Gpar1.O/A1,Gtens1.O/A2,Gtens2.O/A3}
			$
		& 
			$
			\begin{array}{ccc}
				\va1_1 \quad \va2_2&&\va3_3 \quad \va4_4
				\\[1em]
				\Gpar1 & & \Gtens1
				\\
				&\Gtens2
				\\[.5em]
				&\vA1
			\end{array}
			\pswires{a1/Gpar1.L,a2/Gpar1/R,a3/Gtens1.L,a4/Gtens1.R}
			\pswires{Gpar1.O/Gtens2.L,Gtens1.O/Gtens2.R}
			\pswires{Gtens2.O/A1}
        $
		&
        $\begin{array}{ccc@{\qquad}cccc@{\qquad}cccc}
				\\[1em]
				\va1_1	&& \va2_2 &&  \va3_3 &&\va4_4\\[1em]
				&\Gpar1&&&&\Gpar2\\
				&\,\, \vA1_1&&&&\,\,\vA2_2
				\\[.5em]
			\end{array}
			\psaxioms{a2/a3}
			\pswires{a2/Gpar1.R,a3/Gpar2.L}
			\pswires{a1/Gpar1.L,a4/Gpar2.R}
			\pswires{Gpar1.O/A1,Gpar2.O/A2}
			$
%		}
		\\
		\hline
		$\prs_1$
		&
		$\prs_2$
		&
		$\prs_3$
		\\
		\hline
	\end{tabular}
}
\caption{Examples of proof structures: $\prs_1$ is a proof net (therefore a module), $\prs_2$ is a module, and $\prs_3$ is not a module (it admits a test in which $a_2$ and $a_3$ are disconnected from any other vertex, therefore $\prs_3$ cannot be a sub-hypergraph of a proof net).}
\label{fig:modEx}
\end{figure}
%%%%%%%%%%%%%%%%%%%%%%%%%%%%%%%%%%%%%%%%%%%%%%%%%%%

%%%%%%%%%%%%%%%%%%%%%%%%%%%%%%%%%%%%%%%%%%%%%%%%%%
%%%%%%%%%%%%%%%%%%%%%%%%%%%%%%%%%%%%%%%%%%%%%%%%%%
%%%%%%%%%%%%%%%%%%%%%%%%%%%%%%%%%%%%%%%%%%%%%%%%%%
\section{Generalized Connectives in Multiplicative Linear Logic}\label{subsec:genCon}
%%%%%%%%%%%%%%%%%%%%%%%%%%%%%%%%%%%%%%%%%%%%%%%%%%
%%%%%%%%%%%%%%%%%%%%%%%%%%%%%%%%%%%%%%%%%%%%%%%%%%
%%%%%%%%%%%%%%%%%%%%%%%%%%%%%%%%%%%%%%%%%%%%%%%%%%

The notion of generalized (multiplicative) connectives for multiplicative linear logic was introduced since the early works on linear logic \cite{dan:reg:89}.
We say that an inference rule of the sequent calculus is \defn{linear} if each occurrence of subformula (except the principal formula of the rule) occurring in the conclusion of the rule occurs exactly once in its premise(s), and it is \defn{context-free} if no conditions on the non-principal formulas affect the application of the rule.
A rule is \defn{multiplicative} if linear and context-free.
%%%%%%%%%%%%%%%%%%%%%%%%%%%%%%%%%%%%%%%%%%%%%%%%%%
\begin{example}
	Consider the three rules in \Cref{eq:rulesEx} below.
	Only the leftmost is multiplicative:
	the central one is not linear since the subformula $B$ does not occur in the premise,
	while
	the rightmost one is not context-free since the rule requires the sequent to contain an odd number of formulas.
	\begin{equation}\label{eq:rulesEx}
		\begin{array}{c@{\qquad}c@{\qquad}c}
			\vliinf{}{\mbox{\scriptsize $a \ltens (b\lpar c)$ }}{\vdash\Gamma, \Delta, A \ltens (B\lpar C)}{\vdash\Gamma, A}{\vdash\Delta, B, C}
			&
			\vlinf{}{\mathsf W_\lpar}{\vdash\Gamma, A \lpar B}{\Gamma, A}
			&
			\vliinf{}{\mathsf{odd}_\ltens}{\vdash C_1, C_2, \ldots, C_{2k-1}, C_{2k}, A \ltens B}{\vdash C_1,\dots, C_{2k-1}, A}{\vdash C_2, \dots , C_{2k}, B}
		\end{array}
	\end{equation}
\end{example}
%%%%%%%%%%%%%%%%%%%%%%%%%%%%%%%%%%%%%%%%%%%%%%%%%%

In \cite{dan:reg:89} the authors observe that any multiplicative rule can be fully described by a partition (having a block for each of the rule premises) keeping track of how active formulas are distributed among the premise of the rule.
Thus, we can define so called \defn{synthetic rules} 
(see, e.g., \cite{dan:reg:89,gir:meanII,dale-elaine}),
allowing us to gather multiple inference of multiplicative rules to construct a formula by a single rule application, as shown in the following example.

%%%%%%%%%%%%%%%%%%%%%%%%%%%%%%%%%%%%%%%%%%%%%%%%%%
\begin{example}\label{ex:synthetic}
	Consider the following formulas, their \emph{synthetic rules} and the associated partitions.
	
	\begin{equation*}\adjustbox{max width=\textwidth}{$
		\begin{array}{l|c@{\quad}|@{\quad}c@{\quad}|@{\quad}c}
			\mbox{Formula}
			&
			F=a \ltens (b\lpar c)
			&
			G=(a\lpar b) \ltens (c\lpar d)
			&
			H=a \lpar (b \ltens c)
			\\\hline
			\mbox{Derivation(s)}
			&
			\vlderivation{
				\vliin{}{\ltens}{
					\Gamma,\Delta,a\ltens(b\lpar c)
				}{
					\vlhy{\Gamma,a}
				}{
					\vlin{}{\lpar }{\Delta, b\lpar c}{\vlhy{\Delta, b,c}}
				}
			}
			&
			\vlderivation{
				\vliin{}{\ltens}{
					\Gamma,\Delta,(a\lpar b)\ltens(c\lpar d)
				}{
					\vlin{}{\lpar }{\Delta, a\lpar b}{\vlhy{\Delta, a,b}}
				}{
					\vlin{}{\lpar }{\Delta, c\lpar d}{\vlhy{\Delta, c,d}}
				}
			}
			&
			\vlderivation{
				\vlin{}{\lpar}{\Gamma,\Delta,a\lpar(b\ltens c)}{
					\vliin{}{\ltens}{\Gamma,\Delta, a, b\ltens c}{\vlhy{\Gamma, a, b}}{\vlhy{\Delta, c}}
				}
			}
			\mbox{\quad and \quad}
			\vlderivation{
				\vlin{}{\lpar}{\Gamma,\Delta,a\lpar(b\ltens c)}{
					\vliin{}{\ltens}{\Gamma,\Delta, a, b\ltens c}{\vlhy{\Gamma, a,c}}{\vlhy{\Delta, b}}
				}
			}
			\\\hline
			\mbox{Synthetic Rule(s)}
			&
			\vliinf{}{\mathcal{R}_F}{\vdash\Gamma, \Delta, a \ltens (b\lpar c)}{\vdash\Gamma, a}{\vdash\Delta, b, c}
			&
			\vliinf{}{\mathcal{R}_G}{\vdash\Gamma, \Delta, (a \lpar b) \ltens (c\lpar D)}{\vdash\Gamma, a,b}{\vdash\Delta, c,D}
			&
			\vliinf{}{\mathcal{R}_{H_1}}{\vdash\Gamma, \Delta, a \lpar (b\ltens c)}{\vdash\Gamma, a,b}{\vdash\Delta, c}
			\mbox{\quad and \quad}
			\vliinf{}{\mathcal{R}_{H_2}}{\vdash\Gamma, \Delta, a \lpar (b\ltens c)}{\vdash\Gamma, a,c}{\vdash\Delta, b}	
			\\\hline&&\\[-8pt]
			\mbox{Associated partitions}
			&
			\part{\block{1},\block{2,3}}
			&
			\part{\block{1,2},\block{3,4}}
			&
			\part{\block{1,2},\block{3}}
			\mbox{\quad and \quad}
			\part{\block{1,3},\block{2}}
		\end{array}
	$}\end{equation*}
\end{example}
%%%%%%%%%%%%%%%%%%%%%%%%%%%%%%%%%%%%%%%%%%%%%%%%%%

Conversely, 
given a set of partitions $P$ in $\partofn$, we can define a rule introducing an $n$-ary \defn{generalized connective} $\con_P$ 
for each partition in $P$.
In this case, we say that $P$ is the \defn{behavior} of $\con_P$.
Consider the following examples.
%%%%%%%%%%%%%%%%%%%%%%%%%%%%%%%%%%%%%%%%%%%%%%%%%%
\begin{equation}\label{eq:Gir4}
\adjustbox{max width = .9\textwidth}{$
	\begin{array}{c@{\quad}|@{\quad}c@{\qquad}|@{\qquad}c}
		\mbox{Behavior}
		&
		P=\Set{\part{\block{1,2},\block{3,4}},\block{1,4},\block{2,3}}
		&
		Q=\Set{\part{\block{1,3},\block2,\block4},\part{\block1,\block{2,4},\block3}}
		\\\hline&&\\
		
		&
		\vliinf{}{\mbox{\scriptsize$\part{\block{1,2},\block{3,4}}$}}{\vdash \Gamma,\Delta ,\con_P(A,B,C,D)}{\vdash \Gamma, A, B}{\vdash \Delta, C,D}
		&
		\vliiinf{}{\mbox{\scriptsize$\part{\block{1,3},\block2,\block4}$}}{\vdash \Gamma,\Delta,\Sigma ,\con_Q(A,B,C,D)}{\vdash \Gamma, A, C}{\vdash \Delta,B}{\vdash \Sigma, D}
		\\\mbox{Rules}&&\\
		&
		\vliinf{}{\mbox{\scriptsize$\part{\block{1,4},\block{2,3}}$}}{\vdash \Gamma,\Delta ,\con_P(A,B,C,D)}{\vdash \Gamma, A, D}{\vdash \Delta, B,C}
		&
		\vliiinf{}{\mbox{\scriptsize$\part{\block1,\block{2,4},\block3}$}}{\vdash \Gamma,\Delta,\Sigma ,\con_Q(A,B,C,D)}{\vdash \Gamma, A}{\vdash \Delta,B, D}{\vdash \Sigma, C}
	\end{array}
 $}
\end{equation}
%%%%%%%%%%%%%%%%%%%%%%%%%%%%%%%%%%%%%%%%%%%%%%%%%%

However, as shown in various works \cite{dan:reg:89,gir:meanII,mai:14,acc:mai:20}, not all sets of partitions can be considered to be satisfactory in order  to define connectives.
In fact, we allow to use a set of partitions $P\subset\partofn$ to describe a connective only if $P$ admits a set $Q$ such that $P\perp Q$. 
This condition is mandatory to guarantee the possibly to define a dual connective whose rules well-behave with respect to cut-elimination.

In \cite{mai:14,acc:mai:20} it has been proved that there are families of sets of partitions which can be used to describe behaviors different from any synthetic rule defined using $\ltens$ and $\lpar$ rules.
Moreover, in \cite{acc:mai:20} it is also shown that no satisfactory sequent calculus can be defined in presence of generalized connectives due to the lack of the so-called \defn{initial coherence}~\cite{dale-elaine,avrron:canonical} (also called \emph{packaging problem} in \cite{dan:reg:89}), that is, the possibility of having a proof system in which it is always possible to prove ``$A$ implies $A$'' using atomic axiom only.

%%%%%%%%%%%%%%%%%%%%%%%%%%%%%%%%%%%%%%%%%%%%%%%%%%
%%%%%%%%%%%%%%%%%%%%%%%%%%%%%%%%%%%%%%%%%%%%%%%%%%
\subsection{Generalized Connectives in Multiplicative Proof Nets}
%%%%%%%%%%%%%%%%%%%%%%%%%%%%%%%%%%%%%%%%%%%%%%%%%%
%%%%%%%%%%%%%%%%%%%%%%%%%%%%%%%%%%%%%%%%%%%%%%%%%%

Sets of partitions have been used to define generalized connectives in the \mllpn syntax in \cite{dan:reg:89,gir:meanII,mai-pui05,mai:14,acc:mai:20}, overcoming the aforementioned problem of initial coherence.
We here give some intuitions on these connectives, while more precise definitions are provided in \Cref{sec:gen} where we properly define the formal setting required to accommodate them.

Generalized connectives in multiplicative proof structures use sets of partitions to define the \emph{behavior} of new connectives, that is, the way \emph{tests} are constructed.
Intuitively, 
the behavior of $\lpar$ (defined as $\set{\part{\block{1},\block{2}}}$) and $\ltens$ (defined as $\set{\part{\block{1,2}}}$) provide the topological constraints of the definition of the test: for $\lpar$ the link is replaced by a hyperedge connecting only one of the two inputs (connecting the output to one of the two blocks) while for the $\ltens$ the link is replaced by hyperedge connecting both inputs (since both belong to the same block).
Similarly, in defining a test for a link with a given behavior is replaced by certain hyperedges connecting the vertices in a same block.

In this case, given an $n$-ary connective and a partition $P\subset\partofn$,
the condition of the existence of a $Q$ such that $P\perp Q$ is not enough to guarantee the existence of a dual connective well-behaving with respect to cut-elimination.
Thus the stronger condition $P\dperp Q$ is needed%
\footnote{
	Note that in \cite{dan:reg:89,gir:meanII} each multiplicative connective is defined by a pair of sets of dual partitions over the same finite set satisfying an orthogonality condition.
	This condition is sufficient to fully describe these connectives in sequent calculus style, and we here show that it is also sufficient for our proof net expansion paradigm.
	However, it is well-known that in a Curry-Howard oriented interpretation of proof-as-program paradigm stronger conditions are required in order to guarantee a sound dynamic of cut-elimination (that is, not only the two partitions must be orthogonal, but also their orthogonal sets must be so). 
}%
.

\begin{remark}
	As noticed in \cite{mai:14,acc:mai:20}, 
	the definition of more-than-binary generalized connectives 
	requires to include the information about which
	block of inputs is connected to the output of the link connective.
	This information is only required to ensure a sound cut-elimination procedure,
	and it is lost after removing cuts.
    Said differently, the contextual equivalence defined by cut-elimination is not able to distinguish certain behaviors differing in the way set of inputs are connected to the output.
    This information is not relevant for the standard $\MLL$ connectives nor for  \emph{synthetic connectives} (that is, the ones which can indirectly defined by means of combination of $\ltens$ and $\lpar$; see \Cref{ex:synthetic}), since it can be indirectly derived using the less complex nature of these connectives, which are defined inductively using binary ones.
	Nevertheless, this information is not negligible in the general case, since this information may define different tests as shown in the following example explaining in detail \Cref{fig:synthTests}.
\end{remark}

\begin{figure}[t]\label{ex:synthTests}
    \newgate{con}{\widetilde{\kappa}}{}
    \adjustbox{max width=.92\textwidth}{$
        \begin{array}{c@{\quad}|@{\quad}c@{\quad}|@{\quad}c}
            \mbox{Formula} & \mbox{Proof Structure} & \mbox{Tests}
            \\
            \begin{array}{c}
                \kappa(a,b,c)
                \\
                =
                \\
                a\ltens(b\lpar c)
            \end{array}
            &
            \begin{array}{ccccc}
                &\vb1 && \vc1
                \\[1em]
                \va1 &&\Gpar1
                \\[1em]
                &\Gtens1
                \\
                &\vfor1{\kappa(a,b,c)}
            \end{array}
            \pswires{a1/Gtens1.L,b1/Gpar1.L,c1/Gpar1.R,Gpar1.O/Gtens1.R,Gtens1.O/for1}
            &
            \begin{array}{c|c}
                \begin{array}{ccccc}
                    &\vb1 && \vc1
                    \\[1em]
                    \va1 &&\vfor1{b\lpar c}
                    \\
                    &\vfor2{\kappa(a,b,c)}
                \end{array}
                \Gedges{a1/for2,b1/for1,for1/for2}
                &
                \begin{array}{ccccc}
                    &\vb1 && \vc1
                    \\[1em]
                    \va1 &&\vfor1{b\lpar c}
                    \\
                    &\vfor2{\kappa(a,b,c)}
                \end{array}
                \Gedges{a1/for2,c1/for1,for1/for2}
            \end{array}
            \\\hline
            \begin{array}{c}
                \cneg \kappa(\cneg a,\cneg b,\cneg c)
                \\
                =
                \\
                \cneg a\lpar(\cneg b\ltens \cneg c)
            \end{array}
            &
            \begin{array}{ccccc}
                &\vnb1 && \vnc1
                \\[1em]
                \vna1 &&\Gtens1
                \\[1em]
                &\Gpar1
                \\
                &\vfor1{\cneg \kappa(a,b,c)}
            \end{array}
            \pswires{na1/Gpar1.L,nb1/Gtens1.L,nc1/Gtens1.R,Gtens1.O/Gpar1.R,Gpar1.O/for1}
            &
            \begin{array}{c|c}
                \begin{array}{ccccc}
                    &\vnb1 && \vnc1
                    \\[1em]
                    \vna1 &&\vfor1{b\ltens c}
                    \\
                    &\vfor2{\cneg \kappa(a,b,c)}
                \end{array}
                \Gedges{na1/for2,nb1/for1,nc1/for1}
                &
                \begin{array}{ccccc}
                    &\vnb1 && \vnc1
                    \\[1em]
                    \vna1 &&\vfor1{b\ltens c}
                    \\
                    &\vfor2{\cneg \kappa(a,b,c)}
                \end{array}
                \Gedges{nb1/for1,nc1/for1,for1/for2}
            \end{array}
            \\\hline
            \widetilde{\kappa}(a,b,c)
            &
            \begin{array}{ccc}
                \va1&\vb1 & \vc1
                \\[1em]
                &\Gcon1
                \\[1em]
                &\vfor1{\widetilde \kappa(a,b,c)}
            \end{array}
            \pswires{a1/Gcon1.L,b1/Gcon1.north,c1/Gcon1.R,Gcon1.O/for1}
            &
            \begin{array}{c|c|c}
                \begin{array}{ccc}
                    \va1&\vb1 & \vc1
                    \\[1em]
                    &\vfor1{\widetilde \kappa(a,b,c)}
                \end{array}
                \Gedges{a1/for1,b1/for1}
                &
                \begin{array}{ccc}
                    \va1&\vb1 & \vc1
                    \\[1em]
                    &\vfor1{\widetilde \kappa(a,b,c)}
                \end{array}
                \Gedges{a1/for1,c1/for1}
                &
                \begin{array}{ccc}
                    \va1&\vb1 & \vc1
                    \\[1em]
                    &\vfor1{\widetilde \kappa(a,b,c)}
                \end{array}
                \Gedges{a1/b1,c1/for1}
            \end{array}
        \end{array}
        $}
    \caption{
        The behaviors of the synthetic connectives 
        associated to the formula 
        $F(a,b,c)= a \ltens (b \lpar c)$,
        to its dual formula,
        and 
        to a generalized multiplicative connective whose behavior strictly contains the one of $F(a,b,c)$ 
        in such a way is the same of $F(a,b,c)$ if restricted on the inputs only.
    }
    \label{fig:synthTests}
\end{figure}

\begin{example}
    Consider the formula
    $F(a,b,c)= a \ltens (b \lpar c)$ 
    % and its dual $\cneg F(\cneg a, \cneg b, \cneg c)= \cneg a \lpar (\cneg b \ltens \cneg c)$ 
    and the synthetic connectives  
    $\kappa(a,b,c)\coloneqq F(a,b,c)$  and  
    $\cneg\kappa(\cneg a, \cneg b, \cneg c)\coloneqq \cneg F(\cneg a, \cneg b, \cneg c)$
    respectively associated to the sets of partitions 
    $\behof\kappa=\set{\part{\block{1,2},\block 3},\part{\block{1,3},\block 2}}$
    and
    $\behof{\cneg \kappa}=\set{\part{\block{1},\block{2,3}}}$ (see \Cref{fig:synthTests}).
    
    We can now define the connective $\widetilde \kappa$ associated to the same set of partitions of $\kappa$ (that is, the set of partitions $\behof{\widetilde \kappa}=\behof\kappa=\set{\part{\block{1,2},\block 3},\part{\block{1,3},\block 2}}$ ) but in which we allow an extra test which enforces no new partitions among the inputs.
    See the bottom-most row of the table in \Cref{fig:synthTests}, where the new test (the right-most one) enforces the partition $\part{\block{1,2},\block 3}\in \behof\kappa$ 
    over inputs.

    Since $\kappa$ and $\widetilde{\kappa}$ are defined by the same set of partitions over their inputs, 
    they are both orthogonal to $\cneg \kappa$.
    Moreover, both \DRcor \mllpss of $\kappa(a,b,c)\lpar \cneg\kappa(\cneg a,\cneg b, \cneg c)$ and $\widetilde\kappa(a,b,c)\lpar \cneg\kappa(\cneg a,\cneg b, \cneg c)$ are correct, 
    and the result of cut-elimination of a $\kappa$- or a $\widetilde{\kappa}$-gate against a $\cneg\kappa$-gate reduces to a \mllps with the same behavior.
    Note that this implies that $\kappa$ and $\widetilde{\kappa}$ are indistinguishable with respect to the notion of context equivalence usually considered on proof structures (see, e.g., \cite{gir:trans, eng:sei:stellar,eng:sei:trans}).
\end{example}

%%%%%%%%%%%%%%%%%%%%%%%%%%%%%%%%%%%%%%%%%%%%%%%%%%
%%%%%%%%%%%%%%%%%%%%%%%%%%%%%%%%%%%%%%%%%%%%%%%%%%
%%%%%%%%%%%%%%%%%%%%%%%%%%%%%%%%%%%%%%%%%%%%%%%%%%
\section{Logic Programming with Multiplicative Structures}\label{sec:logProg}
%%%%%%%%%%%%%%%%%%%%%%%%%%%%%%%%%%%%%%%%%%%%%%%%%%
%%%%%%%%%%%%%%%%%%%%%%%%%%%%%%%%%%%%%%%%%%%%%%%%%%
%%%%%%%%%%%%%%%%%%%%%%%%%%%%%%%%%%%%%%%%%%%%%%%%%%
\def\bipoleset{\mathfrak F}

In this section we recall the results from \cite{and:02,and:maz:03} (restricted to the multiplicative linear logic fragment) on the possibility to define a logic programming framework based on proof net construction.

%%%%%%%%%%%%%%%%%%%%%%%%%%%%%%%%%%
%%%%%%%%%%%%%%%%%%%%%%%%%%%%%%%%%%
%\subsection{Bipolar Calculus}
%%%%%%%%%%%%%%%%%%%%%%%%%%%%%%%%%%
%%%%%%%%%%%%%%%%%%%%%%%%%%%%%%%%%%

The classical interpretation of logic programming (see, e.g., \cite{andreoli1992logic,and:02,dale-elaine}),
a program is defined by a set of sequent calculus rules and 
its execution is conceived as the process of expanding the open branches of the derivation tree of a given formula.
This correspondence can easily be extended using synthetic (linear) inference rules as the ones from \Cref{ex:synthetic} to define the following methods:
%%%%%%%%%%%%%%%%%%%%%%%%%%%%%%%%%%%%%%%%%%%%%%%%%%
\begin{equation}\label{eq:synthRulesMethods}
	\begin{array}{c@{\quad}|@{\quad}c@{\quad}|@{\quad}c}
		\prolog{F\colonminus a , (b\lpar c)}
		&
		\prolog{G \colonminus (a\lpar b) , (c\lpar d)}
		&
		\prolog{H_1 \colonminus (a \lpar b), c}
		\qquad
		\prolog{H_2 \colonminus (a \lpar c), b}
	\end{array}
\end{equation}
%%%%%%%%%%%%%%%%%%%%%%%%%%%%%%%%%%%%%%%%%%%%%%%%%%

%%%%%%%%%%%%%%%%%%%%%%%%%%%%%%%%%%%%%%%%%%%%%%%%%%
In particular, a specific family of formulas (called \emph{bipoles}) can be used to define methods.
\begin{definition}[\cite{and:02}]
	Given a set of \defn{negative} atoms $\atoms$ whose negations are \defn{positive}, a \defn{monopole} is a disjunction ($\lpar$) of negative atoms.
	A \defn{bipole} is a conjunction ($\ltens$) of monopoles and  positive atoms which contains at least one positive atom.
	Given a set of bipoles $\bipoleset$, the \defn{focussing bipolar sequent calculus} $[\bipoleset, \atoms]$ is given by the set of inference rules of the following form, where $F$ is a bipole in $\bipoleset$.
	\begin{equation}\label{eq:bipole}
		\vlinf{}{\mbox{\scriptsize$F=\outp_1^\perp \ltens \cdots \ltens \outp_n^\perp \ltens (\inp_{1,1}\lpar \cdots\lpar \inp_{1,k_1})\ltens\cdots\ltens  (\inp_{i,1}\lpar \cdots\lpar \inp_{i,k_i})$}}{\vdash \outp_1, \dots, \outp_n}{\vdash \inp_{1,1}, \dots, \inp_{1,k_1} \qquad\cdots\qquad \vdash \inp_{i,1}, \dots, \inp_{i,k_i}}
	\end{equation}
\end{definition}
%%%%%%%%%%%%%%%%%%%%%%%%%%%%%%%%%%%%%%%%%%%%%%%%%%

As shown in \cite{andreoli1992logic}, a bipole $F$ can be seen as a logic programming method having as 
\defn{head} (or \defn{trigger}) the subformula containing the positive atoms of $F$ (a conjunction), and 
as \defn{body} the subformula containing the negative atoms $F$ (a \CNF formula).
Intuitively, each bipole $F$ induces a synthetic rule with principal formula $F$ and 
and whose active formulas are its positive atoms gathered in a same premise if they belong to the same conjunct.
By means of example the rule for the bipole $F$ in \Cref{eq:bipole} can be seen as a synthetic rule introducing the formula $F$ corresponding to the following derivation
\begin{equation}\label{eq:bipoleSynth}
	\vlderivation{
		\vliin{}{\ltens}{\outp_1^\perp \ltens \cdots \ltens \outp_n^\perp \ltens (\inp_{1,1}\lpar \cdots\lpar \inp_{1,k_1})\ltens\cdots\ltens  (\inp_{i,1}\lpar \cdots\lpar \inp_{i,k_i}), \outp_1, \cdots ,\outp_n}{
			\vliiiq{}{\ltens}{(\inp_{1,1}\lpar \cdots\lpar \inp_{1,k_1})\ltens\cdots\ltens  (\inp_{i,1}\lpar \cdots\lpar \inp_{i,k_i})}{
				\vliq{}{\lpar}{\inp_{1,1}\lpar \cdots\lpar \inp_{1,k_1}}{\vlhy{\inp_{1,1}, \dots, \inp_{1,k_1}}}
			}{\vlhy{\cdots}}{
				\vliq{}{\lpar}{\inp_{i,1}\lpar \cdots\lpar \inp_{i,k_i}}{\vlhy{\inp_{i,1}, \dots, \inp_{i,k_i}}}
			}
		}{
			\vliiiq{}{\ltens}{\outp_1^\perp \ltens \cdots \ltens \outp_n^\perp , \outp_1, \cdots ,\outp_n}{\vlin{}{\axrule}{\vdash\outp_1^\perp,\outp_1}{\vlhy{}}}{\vlhy{\cdots}}{\vlin{}{\axrule}{\vdash\outp_n^\perp, \outp_n}{\vlhy{}}}
		}
	}
\end{equation}
In \cite{and:01} it has been proved that the focussing bipolar sequent system with one rule for each $\MLL$ bipole is sound and complete with respect to $\MLL$.

%%%%%%%%%%%%%%%%%%%%%%%%%%%%%%%%%
%%%%%%%%%%%%%%%%%%%%%%%%%%%%%%%%%
\subsection{Bipolar Proof Nets}
%%%%%%%%%%%%%%%%%%%%%%%%%%%%%%%%%
%%%%%%%%%%%%%%%%%%%%%%%%%%%%%%%%%

%%%%%%%%%%%%%%%%%%%%%%%%%%%%%%%%%%%%%%%%%%%%%%%%%%
\begin{figure}[t]
	\pzvertex{bb}{b'\lpar b''}{}
	
	\resizebox{\textwidth}{!}{
		\begin{tabular}{c|c|c|c}
			Bipoles								&
			Sequent calculus derivation with	&
			focussing bipolar expansion			& 
			Proof Net expansion 
			\\[-5pt]
			&
			concurrent expansion ($G$ and $H$) 	&
			(concurrent synthetic rules) 		&
			(one link of type $G\parallel H$)
			\\[-5pt]
			$
			\begin{array}{l}
				F=\cneg a \ltens b \ltens c \\
				G=\cneg b \ltens (b'\lpar b'')\\
				H=\cneg c \ltens c'
			\end{array}
			$
			&
			$\vlderivation{\vliin{}{\mathcal R_F}{F,a,G, H}{\vlhy{b,G}}{\vlhy{c,H}}}$
			$\;\rightsquigarrow\;$
			$\vlderivation{\vlin{}{\lpar}{F,a,G\lpar H}{\vliin{}{\mathcal R_F}{F,a,G,H}{\vlin{}{\mathcal R_G}{b,G}{\vlhy{b'\lpar b''}}}{\vlin{}{\mathcal R_H}{c,H}{\vlhy{c'}}}}}$
			&
			$\vlderivation{\vliin{}{F}{a}{\vlhy{b}}{\vlhy{c}}}$
			$\;\rightsquigarrow\;$
			$\vlderivation{\vliin{}{F}{a}{\vlin{}{G}{b}{\vlhy{b'\lpar b''}}}{\vlin{}{H}{c}{\vlhy{c'}}}}$
			&
			$
			\begin{array}{c@{\hspace{-10pt}}c@{\hspace{-10pt}}c}
				\vb1 & &\vc1 \\[1em]
				& \vmod1{\qquad F \qquad}
				\\[.5em]
				&\va1
			\end{array}
			\pswires{b1/mod1.150,c1/mod1.30,mod1.-90/a1}
			\;\rightsquigarrow\;
			\begin{array}{c@{\hspace{-10pt}}c@{\hspace{-10pt}}c}
				\\
				\vbb2 & &\vc2' \\[.5em]
				& \vmod2{\qquad G\parallel H \qquad}
				\\[.5em]
				\vb1 & &\vc1 \\[.5em]
				& \vmod1{\qquad F \qquad}
				\\[.5em]
				&\va1
			\end{array}
			\pswires{b1/mod1.150,c1/mod1.30,mod1.-90/a1}
			\pswires{bb2/mod2.150,c2/mod2.30}
			\pswires{mod2.-150/b1,mod2.-30/c1}
			$
		\end{tabular}
	}
	\vspace{-5pt}
	\caption{A concurrent application of the bipoles $G$ and $H$ after $F$ represented in different formalisms}
	\label{fig:concurrentEx}
\end{figure}
%%%%%%%%%%%%%%%%%%%%%%%%%%%%%%%%%%%%%%%%%%%%%%%%%%

The idea of using the focussing bipolar sequent calculus has been further developed in \cite{and:maz:03}, where the authors proposed to model such a framework using  proof nets construction instead of proof search in sequent calculi.
The main advantage of the graphical syntax with respect to the bipolar sequent calculus is that in this latter, even if this rule admits a non-singleton trigger, a rule can expand only a single branch of a derivation.
In fact, the tree-like structure of sequent calculus syntax allows us to expand one leaf of the derivation tree at a time by applying a rule.
For instance, consider \Cref{fig:concurrentEx} where the concurrent application of two methods on two different branches of a derivation can be represented by an expansion of a \mllps with a single link.

More precisely, we can use bipoles to define new link types in the same manner a bipole defines a new synthetic inference rule in the sequent calculus.
Each test replaces  such a link with one of the possible tests of the \mllps representing the bipole.
By means of example, the bipolar rule from \Cref{eq:bipole} could be used to define the link below on the left, 
and tests would replace such a link with any test of the \mllps below on the right, which represents the open derivation in \Cref{eq:bipoleSynth}.
% 
%%%%%%%%%%%%%%%%%%%%%%%%%%%%%%%%%%%%%%%%%%%%%%%%%%
\begin{equation}\label{eq:bipolarLink}
	\adjustbox{max width=.92\textwidth}{$\begin{array}{c@{\quad}|@{\quad}c}
			\mbox{Link of $F$}
			&
			\mbox{The \mllps of the synthetic rule for $F$}
			\\
			\mbox{associated to the bipole in \Cref{eq:bipole}}
			&
			\mbox{(see \Cref{eq:bipoleSynth})}
			\\
			\begin{array}{c@{}c@{}c@{\hspace{-10pt}}c@{\hspace{-10pt}}c@{}c@{}c}
				\va1_{1,1} &\dots & \va2_{1,k_1} & \dots &  \va3_{i,1} &\dots & \va4_{i,k_i} \\[2em]
				&       &           & 	\vmod1{\qquad\link\qquad}\\[2em]
				&&      ~\vA1_1      & \dots &~\vA2_n
			\end{array}
			\pswires{a2/mod1.120,a3/mod1.60}
			\pslwire{a1}{mod1.150}{.5}{}
			\pslwire{a4}{mod1.30}{.5}{}
			\pswires{mod1.-150/A1,mod1.-30/A2}
			&
			\begin{array}{cc@{}c@{}c@{}c@{}c@{}c@{}c@{}c@{}ccc@{\qquad\qquad\qquad}cccccc}
				&\va1_{1,1} &\dots & \va2_{1,k_1} & \dots &  \va3_{i,1} &\dots & \va4_{i,k_i} \\[-1em]
				\virt1
				\\[1em]
				& & \Gmpar1 &           & 	    &           &\Gmpar2&  &
				\vA1_1^\perp &\dots& \vA3_n^\perp & 
				\\[1em]
				&&&&\Gmtens2&&&&\Gmtens1
				\\
				&&&&&&\Gtens3 
				%                    \\[5pt]
				%                    &&&&&&\Gparp3 
				\\
				&&&&&&&&&&&&&&&&\virt2
				\\[-1em]
				&&&&&&\vF1&&&&&&&&\vA2_1 & \dots & \vA4_n
			\end{array}
			\pswires{A1/Gmtens1.L,A3/Gmtens1.R}
			\pswires{a1/Gmpar1.L,a2/Gmpar1.R,a3/Gmpar2.L,a4/Gmpar2.R}
			\pswires{Gmpar1.O/Gmtens2.L,Gmpar2.O/Gmtens2.R}
			\pswires{Gmtens2.O/Gtens3.L, Gmtens1.O/Gtens3.R}
			%    \pswires{Gtens3/Gparp3}
			\psaxiom{A1}{A2}{.65}{}
			\psaxiom{A3}{A4}{.65}{}
			\linkbox{irt1}{irt2}
			\pswires{Gtens3.O/F1}
		\end{array}$}
\end{equation}
%%%%%%%%%%%%%%%%%%%%%%%%%%%%%%%%%%%%%%%%%%%%%%%%%%

Note that the link above on the left has outputs $\outp_1,\ldots, \outp_n$, while the \mllps on the right has an additional output $F$. 
In the next section we provide a solution to address this mismatch (see no-output links in \Cref{def:links}), but it is worth noting
that we can define links representing concurrent application of bipoles by simply connecting those additional outputs via a $\lpar$-link.
Analyzing the shape of the \mllps describing a \emph{concurrent bipole}.

\begin{definition}\label{def:concBip}
	We introduce the following naming for specific proof structure (see examples in \Cref{fig:bipLinkDec}):
	
	\begin{itemize}
		\item \defn{body}: a $\lmtens$-link collecting the outputs of $\lmpar$-links (i.e., the \mllps of a \CNF-formula). 
		The body gathers the clauses corresponding to the body of a method;
		
		\item \defn{header}: a bundle of $\axrule$-links attached to a $\lmtens$-link by exactly one of their two outputs each.
		The header gathers the outputs corresponding to the head of a method;
		
		\item \defn{synchronizer}: a $\lmpar$-link collecting the outputs of $\ltens$-links (i.e., the \mllps of a \DNF-formula).
		The synchronizer establishes a connection between headers of methods and their bodies.
	\end{itemize}
	A \defn{concurrent bipole} is a proof structure made of a synchronizer whose inputs are attached to headers and bodies.
	That is, a \mllps of the following shape:
	%%%%%%%%%%%%%%%%%%%%%%%%%%%%%%%%%%%%%%%%%%%%%%%%%%
	\begin{equation}\label{eq:bipLink}
		\adjustbox{max width=.9\textwidth}{$
			\begin{array}{cccccccccccc}
				&
				\va1_1 & \dots & \va2_{k_1} & \dots & \va3_{k_{n-1}+1} & \dots & \va4_{k_n}
				\\[-1em]
				\virt1&
				\\[1em]
				&& \vGmod1{\quad\cmod_1\quad} &  &  &  &\vGmod2{\quad\cmod_n\quad}  & &  \vTmod4{\hmod_1} & \cdots & \vTmod5{\hmod_n} 
				\\[2em]
				&&&&&&&\vGmod3{\quad\rmod\quad}
				\\[5pt]
				&&&&&&&&&&&\virt2
				\\[-1em]
				&&&&&&&\vF1& \vA1_1 \cdots \vA2_{h_1}&\cdots &\vA3_{h_{n-1}+1} \cdots \vA4_{h_n}
			\end{array}
			\pswires{a1/Gmod1.120,a2/Gmod1.60,a3/Gmod2.120,a4/Gmod2.60}
			\pswires{Gmod2/Gmod3.120,Tmod4.-120/Gmod3.60}
			\pswire{Gmod1}{Gmod3.150}{.3}{}
			\pswire{Tmod5.-120}{Gmod3.30}{.3}{}
			\pswires{Tmod4/A1,Tmod4.-60/A2,Tmod5/A3,Tmod5.-60/A4}
			\linkbox{irt1}{irt2}
			\pswires{Gmod3.O/F1}
			$}
	\end{equation}
	where 
	$\hmod_1,\ldots,\hmod_n$ are headers,
	$\cmod_1,\ldots,\cmod_n$ are bodies, and
	$\rmod $ is a synchronizer.

\end{definition}

%%%%%%%%%%%%%%%%%%%%%%%%%%%%%%%%%%%%%%%%%%%%%%%%%%
\begin{figure}[t]
	\resizebox{\textwidth}{!}{$
		\begin{array}{c|c|c}
			\mbox{Body}
			&
			\mbox{Header}
			&
			\mbox{Synchronizer}
			\\\hline&&\\
			\begin{array}{ccccccc}
				\va1_1 & \dots &\va2_{i_1} & \dots & \va3_{i_{k-1}+1} & \dots & \va4_{i_k}
				\\[1em]
				&\Gmpar1&           &        &                   &\Gmpar2
				\\[1em]
				&       &           &\Gmtens1&                   &
			\end{array}    
			\pswires{a1/Gmpar1.L,a2/Gmpar1.R,a3/Gmpar2.L,a4/Gmpar2.R}
			\pswires{Gmpar1.O/Gmtens1.L,Gmpar2.O/Gmtens1.R}
			&
			\begin{array}{ccccccccc}
				\\[1em]
				\vA1_1^\perp  & \dots & \vA3_n^\perp  & &\vA2_1&\dots& \vA4_n
				\\[1em]
				&\Gmtens1
			\end{array}    
			\pswires{A1/Gmtens1.L,A3/Gmtens1.R}
			\psaxioms{A1/A2,A3/A4}
			&
			\begin{array}{ccccccc}
				\va1_1 &  &\va2_{2} & \dots & \va3_{2k-1} &  & \va4_{2k}
				\\[1em]
				&\Gtens1&           &        &                   &\Gtens2
				\\[1em]
				&       &           &\Gpar1&                   &
			\end{array}    
			\pswires{a1/Gtens1.L,a2/Gtens1.R,a3/Gtens2.L,a4/Gtens2.R}
			\pswires{Gtens1.O/Gpar1.L,Gtens2.O/Gpar1.R}
		\end{array}
		$}
	\caption{Components of a concurrent bipolar link.}
	\label{fig:bipLinkDec}
\end{figure}
%%%%%%%%%%%%%%%%%%%%%%%%%%%%%%%%%%%%%%%%%%%%%%%%%%

%%%%%%%%%%%%%%%%%%%%%%%%%%%%%%%%%%%%%%%%%%%%%%%%%%
%%%%%%%%%%%%%%%%%%%%%%%%%%%%%%%%%%%%%%%%%%%%%%%%%%
%%%%%%%%%%%%%%%%%%%%%%%%%%%%%%%%%%%%%%%%%%%%%%%%%%
\section{Generalizing Multiplicative Proof Structures}\label{sec:gen}
%%%%%%%%%%%%%%%%%%%%%%%%%%%%%%%%%%%%%%%%%%%%%%%%%%
%%%%%%%%%%%%%%%%%%%%%%%%%%%%%%%%%%%%%%%%%%%%%%%%%%
%%%%%%%%%%%%%%%%%%%%%%%%%%%%%%%%%%%%%%%%%%%%%%%%%%
\def\multstr{multiplicative structure\xspace}
\def\multstrs{multiplicative structures\xspace}

In this section we provide a general setting to define hypergraphs with hyperedges labeled by sets of partitions generalizing the syntax of \mllpss, allowing us to accommodate generalized connectives, generalize the \DRcor, and define the notion of a \emph{component} as a ``proof structure which may be a piece of a proof net''.

%%%%%%%%%%%%%%%%%%%%%%%%%%%%%%%%%%%%%%%%%%%%%%%%%%%
\begin{definition}\label{def:links}
	A \defn{link type} (or simply \defn{type}) is a triple
	$\tuple{n, m , \behsym}$
	given by two natural numbers $n,m\in\N$ and a \defn{behavior}
	$\behsym\subseteq\partof{\set{\inp_1,\ldots,\inp_n,\outp_1,\ldots, \outp_m}}$.
    We define the following link types:
    $$
    \adjustbox{max width = \textwidth}{$
    \begin{array}{c@{\;}c@{\;}l@{\quad}c@{\;=\;}l@{\quad}c@{\;=\;}l}
        \axrule 	&=& \tuple{ 0, 2 , \Set{\part{\block{\outp_1, \outp_2}}}}
        &
        \ltens 		& \tuple{ 2, 1 , \Set{\part{\block{\inp_1, \inp_2,\outp_1}}}}
        &
        \lpar 		& \tuple{ 2, 1 , \Set{\part{\block{\inp_1,\outp_1},\block{\inp_2}},\part{\block{\inp_1},\block{\inp_2,\outp_1}}}}
        \\
        \cutr 		&=& \tuple{ 2, 0  ,\part{\block{\outp_1,\outp_2}}}
        &
        \lmparp 	& \tuple{ n, 0  ,\part{\block{\outp_1},\dots,\block{\outp_n}}}
        &
        \lmtensp 	& \tuple{ n, 0  ,\part{\block{\outp_1,\ldots, \outp_n}}}
        \\
        &&&
        \lmtens 	& \tuple{ n, 1  ,\Set{\part{\block{\inp_1,\ldots, \inp_n,\outp_1}}}}
        &
        \lmpar 		& \tuple{ n, 1  , \Set{\part{\block{\inp_1}, \dots, \block{\inp_{k-1}}, 
                    \block{\inp_{k},\outp_1},
                    \block{\inp_{k+1}},\dots,\block{\inp_n}}}_{k\in\intset1n}}
    \end{array}
    $}
    $$
\end{definition}
\begin{remark}
	By definition, 
	$\ltens_2=\ltens$ and $\lpar_2=\lpar$ and $\cutr=\ltens^\bullet_2$.
	The type
	$\ltens_1=\lpar_1$ 
	can be thought as an edge connecting the input with the output
	since they both have behavior $\part{\block{\inp_1,\outp_1}}$.
    The type $\ltens_1^{\bullet}=\lpar_1^{\bullet}$ can be thought as a ``dead-end'' hyperedge with one input and no output (their behavior is $\part{\block{\inp_1}}$).
\end{remark}
%%%%%%%%%%%%%%%%%%%%%%%%%%%%%%%%%%%%%%%%%%%%%%%%%%%

%%%%%%%%%%%%%%%%%%%%%%%%%%%%%%%%%%%%%%%%%%%%%%%%%%%
\begin{definition}
    A \defn{\multstr} over a signature $\Sigma$ is a linear hypergraph $\gH$ such that each hyperedge $\link$ is labelled with a $\tuple{n,m,\behsym}\in\Sigma$ such that $\sizeof{\iof\hed}=n$ and $\sizeof{\oof\hed}=m$.
	
    In drawing \mss, we label hyperedges by the corresponding type.
    The definition of \defn{sub-\ms}, as well as the definition of \defn{sequential} and \defn{parallel composition} of \mss are defined extending the ones for hypergraphs.
\end{definition}
%%%%%%%%%%%%%%%%%%%%%%%%%%%%%%%%%%%%%%%%%%%%%%%%%%%

In order to extend the notion of \DRcor, we need to provide a way to define tests.
%%%%%%%%%%%%%%%%%%%%%%%%%%%%%%%%%%%%%%%%%%%%%%%%%%
\begin{definition}
	Let $\gS=\tuple{\vertices[\gG],\hedge[\gS]}$ be a \ms.
	A \defn{switching} for $\gS$ is a map $\switch$ assigning to each link $\link$ a single partition $\switch(\link)\in \behof\link$.
	We denote by $\switchset{\gS}$ the set of all possible switchings for $\gS$.
	The \defn{test} $\switch(\gS)$ defined by the switching $\switch$
	is the undirected hypergraph obtained by replacing each link in $\hedge[\gS]$ with one undirected hyperedge for each block in $\switch(\link)$.
	Formally, 
	$\switch(\gS)=\tuple{\vertices[\gS] \;,\; \bigcup_{\link \in \hedge[\gS]}\set{\gamma \mid \gamma \in \switch(\link)}}$.
	
	The \defn{behavior} of a test $\switch$ is the partition $p_\switch$ of the border of $\gS$ defined by the connected components of $\switch(\gS)$.
	That is, $x,y\in\bordof\gS$ belongs to the same block $\gamma \in p_\switch$ iff the vertices $x$ and $y$ are connected of $\switch(\gS)$.
	The \defn{behavior} of a \ms $\gS$ is defined as the set of behaviors of its tests, that is, $\behof\gS=\Set{p_\switch\in\partof{\bordof{\gS}}\mid\switch\in\switchset{\gS}}$.
\end{definition}
%%%%%%%%%%%%%%%%%%%%%%%%%%%%%%%%%%%%%%%%%%%%%%%%%%%

The behavior of a \ms is the collection of the information on how a \ms interacts with any possible context.
It keeps track of the connectivity the vertices in its border in all its tests (see an example in \Cref{fig:extest}).

%%%%%%%%%%%%%%%%%%%%%%%%%%%%%%%%%%%%%%%%%%%%%%%%%%%
\begin{figure}[t]
	\newgate{con}{\kappa}{}
	\adjustbox{max width=\textwidth}{$\begin{array}{c|c}
			\begin{array}{ccccccccc}
				\va1 &\vb2 &\vc3 & \vd4 &&\ve5
				\\[.5em]
				&\Gcon1&& &\Gtens1
				\\[1em]
				&&&\Gpar1
				\\
				&&&\vo0
			\end{array}
			\pswires{a1/Gcon1.L,b2/Gcon1.north,c3/Gcon1.R}
			\pswires{d4/Gtens1.L,e5/Gtens1.R}
			\pswires{Gcon1.O/Gpar1.L,Gtens1.O/Gpar1.R,Gpar1.O/o0}
			&
			\begin{array}{c|c}
				\switch(\kappa)=\part{\block{\outp_1,\inp_1,\inp_2},\block \inp_3}
				\quad
				\switch(\ltens)=\part{\block{\op_1,\ip_1,\ip_2}}
				\quad
				\switch(\lpar)=\part{\block{\op_1,\ip_1},\block{\ip_2}}
				&
				\switch(\kappa)=\part{\block{\outp_1,\inp_1,\inp_3},\block{\inp_2}}
				\quad
				\switch(\ltens)=\part{\block{\op_1,\ip_1,\ip_2}}
				\quad
				\switch(\lpar)=\part{\block{\op_1,\ip_2},\block{\ip_1}}
				\\\hline\\
				\begin{array}{ccccccccc}
					\va1 &\vb2 &\vc3 & \vd4 &&\ve5
					\\[1em]
					&\vfor1{\kappa(a,b,c)}&& &\vfor2{d\ltens e}
					\\[1em]
					&&&\vfor3{\kappa(a,b,c)\lpar(d\ltens e)}
				\end{array}
				\Gedges{a1/for1,b2/for1,d4/for2,e5/for2,for1/for3}
				&
				\begin{array}{ccccccc@{\hspace{-5pt}}c@{\hspace{-5pt}}ccc}
					\va1 &\vb2 &\vc3 & \vd4 &&\ve5
					\\[1em]
					&\vfor1{\kappa(a,b,c)}&& &\vfor2{d\ltens e}
					\\[1em]
					&&&\vfor3{\kappa(a,b,c)\lpar(d\ltens e)}
				\end{array}
				\Gedges{a1/for1,c3/for1,d4/for2,e5/for2,for2/for3}
			\end{array}
		\end{array}$}
	\caption{A \mllps where 
		$\kappa
		=
		\tuple{3,1,\set{
				\part{\block{\outp_1,\inp_1,\inp_2},\block \inp_3},
				\part{\block{\outp_1,\inp_1,\inp_3},\block{\inp_2}}
			}
		}$
		and two of its test}
	\label{fig:extest}
\end{figure}
%%%%%%%%%%%%%%%%%%%%%%%%%%%%%%%%%%%%%%%%%%%%%%%%%%

%%%%%%%%%%%%%%%%%%%%%%%%%%%%%%%%%%%%%%%%%%%%%%%%%%%
\begin{definition}\label{def:netGCor}
	Let $\gS$ be a \ms.
	We say that $\gS$ is \defn{correct} if $\switch(\gS)$ is connected and acyclic for any test $\switch\in\switchset{\gS}$.
	It is \defn{\mnet} if correct and if $\gS$ has no inputs and at least one output.
	If each test $\switch(\gS)$ of $\gS$ is acyclic and each of its vertices is connected to a vertex of the border, then we say that $\gS$ is a \defn{(multiplicative) \mcomp}.
	A \defn{transitory} component (or \defn{\tracomp}) is a \mcomp such that each input admits a test where it is connected to an output.
	A \defn{\mmod} $\mod$ is a connected non-empty \ms such that 
	$\mod\subset \gS$ for a \mn $\gS$.
\end{definition}
\begin{example}
	Consider the examples in \Cref{fig:modEx}.
	The \mstr $\prs_1$ is a \mnet (therefore a \tracomp).
	$\prs_2$ is a \mcomp and a \mmod, but not a \tracomp (it has no outputs).
	The \ms $\prs_3$ is not a \mcomp (it admits a test in which $a_2$ and $a_3$ are disconnected from any other vertex)
	nor a \mmod (there is no \mn containing $\prs_3$ as a sub-\ms).
\end{example}
%%%%%%%%%%%%%%%%%%%%%%%%%%%%%%%%%%%%%%%%%%%%%%%%%%%

%%%%%%%%%%%%%%%%%%%%%%%%%%%%%%%%%%%%%%%%%%%%%%%%%%
\begin{theorem}\label{thm:Mod:char}
	All \mmods are \mcomps.
\end{theorem}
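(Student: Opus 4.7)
The plan is to relate switchings of the module $\mod$ to switchings of the ambient $\mn$ $\gS \supseteq \mod$, transferring correctness properties downward. For any $\switch \in \switchset{\mod}$, I extend it to a switching $\switch' \in \switchset{\gS}$ by choosing, for each link $\link \in \hedge[\gS] \setminus \hedge[\mod]$, an arbitrary partition from $\behof{\link}$. Since $\gS$ is a $\mn$, the test $\switch'(\gS)$ is connected and acyclic, and by construction $\switch(\mod)$ is a sub-hypergraph of $\switch'(\gS)$.

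Acyclicity of $\switch(\mod)$ is then immediate: any elementary cycle in $\switch(\mod)$ would lift to a cycle in $\switch'(\gS)$, contradicting its correctness.

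The main obstacle is the border-connectivity condition, namely that every vertex of $\mod$ is connected, \emph{within} $\switch(\mod)$, to some vertex of $\bordof{\mod}$. The key geometric ingredient is a direct consequence of the linearity of $\gS$: any vertex $v \in \vertices[\mod]$ lying in the border of some link $\link \in \hedge[\gS] \setminus \hedge[\mod]$ already belongs to $\bordof{\mod}$, because linearity forbids $v$ from being input (respectively output) of two distinct links of $\gS$. Equipped with this, fix $v \in \vertices[\mod]$, let $K$ be its connected component in $\switch(\mod)$, and suppose for contradiction that $K \cap \bordof{\mod} = \emptyset$. If $K \subsetneq \vertices[\gS]$, then connectedness of $\switch'(\gS)$ produces a hyperedge crossing between $K$ and $\vertices[\gS] \setminus K$; such a hyperedge cannot belong to $\switch(\mod)$ (otherwise its vertices would all be in $K$), so it is a block of $\switch'(\link)$ for some $\link \in \hedge[\gS] \setminus \hedge[\mod]$, and the linearity observation yields a vertex of $K \cap \bordof{\mod}$---a contradiction. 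Otherwise $K = \vertices[\gS]$, and then $K$ contains every output of $\gS$; since each such output is input of no link of $\gS$, it is input of no link of $\mod$ either, so it lies in $\outpof{\mod} \subseteq \bordof{\mod}$, again contradicting the assumption. As $\gS$ has at least one output by definition of $\mn$, this second sub-case is non-vacuous, and the proof is complete.
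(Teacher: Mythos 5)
Your proof is correct and follows essentially the same route as the paper's: extend a switching of the module to one of the ambient net, derive acyclicity of the module's tests from acyclicity of the net's tests, and derive border-connectivity from connectedness of the net's tests. Your write-up is in fact more careful than the paper's sketch, since you make explicit the linearity observation (a vertex of $\mod$ touched by a link outside $\mod$ must lie in $\bordof{\mod}$) and handle the degenerate case $K=\vertices[\gS]$ via the net's guaranteed output, both of which the paper leaves implicit.
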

\begin{proof}
	If $\mod$ is a \mmod, then 
	each test $\switch(\mod)$ is acyclic, otherwise  there should be a cycle in a test of $\gS$.
	Moreover, 
	as consequence of the fact that $\gS$ is connected,
	no sub-\ms $\gS'$ of $\gS$ such that  $\bordof{\gS'}=\emptyset$.
	Therefore each vertex in $\mod$ must be connected to a vertex in $\bordof{\mod}$ in each test $\switch(\mod)$ otherwise $\gS$ would not be connected.
\end{proof}
%%%%%%%%%%%%%%%%%%%%%%%%%%%%%%%%%%%%%%%%%%%%%%%%%%

\begin{definition}
	Let $\mod_1$ and $\mod_2$ be \mcomps
	and
	$X \subseteq \left(\inpof{\mod_1}\cap \outpof{\mod_2}\right) $ non-empty.
	We say that $\mod_2$ \defn{expands} $\mod_1$  (on $X$) 
	if $\mod_1\comp X \mod_2$ is a \mcomp.
\end{definition}

%%%%%%%%%%%%%%%%%%%%%%%%%%%%%%%%%%%%%%%%%%%%%%%%%%
\begin{theorem}\label{thm:Mod:comp}
	Let 
	$\mod_1$ and $\mod_2$ be \mcomps
	and
	$X \subseteq \left(\inpof{\mod_1}\cap \outpof{\mod_2}\right) $ non-empty.
	Then 
	$$
	\begin{array}{c}
		\mod_2 \mbox{ expands }\mod_1
		\\
		\mbox{on $X$}
	\end{array}
	\iff
	\left\{\mbox{\begin{tabular}{l@{\;}r@{\;}l}
		$X\neq \bordof{\mod_1}\cup \bordof{\mod_2}$
		\\
		$\restr{\behof{\mod_1}}  X \wperp \restr{\behof{\mod_2}}  X $
		\\
		each $x\in X$ is connected
		&
		either& 
		to a $y\in \left(\bordof{\mod_1}\setminus X\right)$ in each test of $\mod_1$,
		\\&
		or &
		to a $z\in \left(\bordof{\mod_2}\setminus X\right)$ in each test of $\mod_2$
	\end{tabular}
	}\right.
	$$
\end{theorem}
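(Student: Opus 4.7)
The key observation is that tests of $\mod_1 \comp X \mod_2$ are in canonical bijection with pairs $(\switch_1,\switch_2)$ where $\switch_i$ is a test of $\mod_i$, obtained by restricting each link's switching to its home \mcomp; the associated undirected hypergraph of the composed test is $\switch_1(\mod_1) \cup \switch_2(\mod_2)$ with the $X$-vertices identified. Under this bijection, cycles of the composed test that are not already cycles of one of the $\switch_i(\mod_i)$ correspond exactly to cycles of the incidence graph $\gG(\restr{p_{\switch_1}}{X},\restr{p_{\switch_2}}{X})$, and connected components of the composed test extend those of the individual tests by merging components that share an $X$-vertex.

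For the direction $(\Leftarrow)$, assume (1)--(3). Condition (1) ensures that $(\bordof{\mod_1} \cup \bordof{\mod_2}) \setminus X$ is non-empty. Acyclicity of every composed test follows immediately from (2) via the correspondence above, since cycles internal to each $\switch_i(\mod_i)$ are ruled out by the \mcomp property of $\mod_i$. For connectedness to the border, let $v$ be any vertex of the composition; without loss of generality $v \in \mod_1$, and since $\mod_1$ is a \mcomp, $v$ is connected in $\switch_1(\mod_1)$ to some $b \in \bordof{\mod_1}$. If $b \notin X$ then $b$ already lies in the border of the composition. Otherwise $b \in X$, and (3) provides a path from $b$ either in $\switch_1(\mod_1)$ to a vertex in $\bordof{\mod_1} \setminus X$, or in $\switch_2(\mod_2)$ to a vertex in $\bordof{\mod_2} \setminus X$; concatenating yields a path in $\switch$ from $v$ to a border vertex of the composition.

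For the direction $(\Rightarrow)$, assume $\mod_1 \comp X \mod_2$ is a \mcomp. Since $X$ is non-empty the composition has vertices, so its border must be non-empty by the connectedness-to-border property, giving (1). The correspondence reverses to give (2): if some $p \in \behof{\mod_1}$ and $q \in \behof{\mod_2}$ were such that $\gG(\restr{p}{X},\restr{q}{X})$ contained a cycle, this would lift to a cycle in the composed test realising the switchings with behaviours $p$ and $q$, contradicting the acyclicity of the composition. For (3), argue by contrapositive: if some $x \in X$ admits tests $\switch_1^\ast$ of $\mod_1$ and $\switch_2^\ast$ of $\mod_2$ witnessing the failure of both alternatives, then in the composed test $\switch_1^\ast \cup \switch_2^\ast$ the connected component of $x$ can be shown to remain disjoint from $(\bordof{\mod_1} \cup \bordof{\mod_2}) \setminus X$, contradicting the \mcomp property.

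The main difficulty will be the last step: naive inspection of the blocks of $x$ in $\switch_1^\ast$ and $\switch_2^\ast$ does not immediately suffice, since other $X$-vertices lying in the same composed component could in principle provide escape routes to $(\bordof{\mod_1} \cup \bordof{\mod_2}) \setminus X$ through one of the fixed tests. The cleanest route is to iterate block closure alternately under $\switch_1^\ast$ and $\switch_2^\ast$ starting from $\{x\}$, and verify that the resulting fixed-point remains contained in the internal vertices together with the $X$-vertices reachable from $x$ without ever exiting $X \cup \mathsf{int}(\mod_1 \comp X \mod_2)$; the two failure hypotheses for $x$ close this closure at each step.
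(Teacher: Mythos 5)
Your overall strategy coincides with the paper's: decompose each test of $\mod_1\comp X\mod_2$ into a pair of tests of the two \mcomps, read acyclicity off the weak orthogonality of the restricted behaviours via the incidence graph, and obtain border-connectivity by concatenating paths through $X$. Your $(\Leftarrow)$ direction and items (1)--(2) of $(\Rightarrow)$ are correct and essentially identical to the paper's argument; you are in fact slightly more careful than the paper, whose reading of ``not weakly orthogonal'' as ``two elements sharing a block on both sides'' only accounts for the multi-edge cycles of the incidence graph.

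The gap is exactly where you flag it, and the proposed repair does not close it. The two failure hypotheses constrain only the blocks containing $x$ itself in $\switch_1^\ast$ and $\switch_2^\ast$; they say nothing about any other vertex $x'\in X$ absorbed at the second stage of your closure. Concretely, suppose that in $\switch_1^\ast(\mod_1)$ the vertex $x$ is connected to $x'\in X$ but to nothing in $\bordof{\mod_1}\setminus X$, while in $\switch_2^\ast(\mod_2)$ the vertex $x$ is isolated from $\bordof{\mod_2}\setminus X$ but $x'$ is connected to some $z\in\bordof{\mod_2}\setminus X$. Then the composed test is acyclic and border-connected ($x$ escapes through $x'$ and then $z$), so the composition can be a \mcomp even though condition (3) fails for $x$; weak orthogonality does not exclude this configuration, since it only forbids $x$ and $x'$ from being connected on both sides simultaneously. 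Such a configuration is realizable, e.g.\ taking $\mod_1$ to be a $\cutr$-link on $x,x'$ in parallel with an $\axrule$-link, and $\mod_2$ to be the parallel composition of an $\axrule$-link with outputs $x',z$ and a piece whose only border vertex is $x$ (an $\axrule$-link whose two outputs feed a $\lpar$-link with output $x$). So the closure genuinely escapes, and no argument using only the stated hypotheses about $x$ can prevent it. You have not missed something the paper supplies: the paper's own treatment of this step is the single sentence asserting that since each vertex of $\mod$ is connected to $\bordof\mod$ in each test, ``each $x\in X$ is'', which gives connectivity of $x$ to the border of the composition but not the per-\mcomp, per-test disjunction that condition (3) actually states. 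Closing this direction appears to require either extra hypotheses (for instance connectedness of $\mod_1$ and $\mod_2$, which \mcomps are not required to satisfy) or a reformulation of condition (3).
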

\begin{proof}
	By definition of \mcomp, letting 
	$\mod=\mod_1\comp X \mod_2$.
	\begin{itemize}
		
		\item[$\left(\Rightarrow\right)$]
		If $\mod$ is a \mcomp, then $\bordof\mod = \left(\bordof{\mod_1}\cup\bordof{\mod_2}\right)\setminus X$ must be non-empty, 
		thus $X\neq\left(\bordof{\mod_1}\cup\bordof{\mod_2}\right)$.

		If we assume that $\restr{\behof{\mod_1}}{X}\not \wperp \restr{\behof{\mod_2}}{X}$, 
		then there are $x,y\in X$ such that $\set{x,y}\subset\gamma_1 \in p \in \restr{\behof{\mod_1}}X$ and $\set{x,y}\subset\gamma_2 \in q \in \restr{\behof{\mod_2}}X$. 
		That is, there is a path between $x$ and $y$ in both $\switch_1(\mod_1)$ and $\switch_2(\mod_2)$ with $\switch_i\in\switchset{\mod_i}$ and $i\in\set{1,2}$. 
	Thus there is a switch $\switch\in \switchset\mod$ such that $\switch(\mod)$ contains a cycle.
		This contradicts the hypothesis of $\mod$ being a \mllmod.
		
		Finally, since each vertex of $\mod$ is connected to a vertex in $\bordof\mod=\left(\bordof{\mod_1}\cup\bordof{\mod_2}\right)\setminus X$ in each test,
		then, each $x\in X$ is.

		\item[$\left(\Leftarrow\right)$]
		Since 
		$\mod_1$ and $\mod_2$ are \mcomps, 
		then $\switch(\mod_1)$ and $\switch(\mod_2)$ does not contain cycles (where, for both $i\in\set{1,2}$.
		Let us denote by $\sigma(\mod_i)$ the test defined from $\mod_i$ by the switch obtained by restringing $\sigma\in\switchset{\mod}$ to $\mod_i$).
		Then, if $\switch(\mod)$ contains a cycle, then 
		there are $x,y\in X$ such that $x$ and $y$ are connected in $\sigma(\mod_1)$ and in $\sigma(\mod_2)$.
		This implies the existence of a $\gamma_1\in p \in \restr{\behof{\mod_1}}X$ and a $\gamma_2\in q \in \restr{\behof{\mod_2}}X$ with $\gamma_1$ and $\gamma_2$ both containing $x$ and $y$, therefore $p\not \wperp q$, contradicting the hypothesis.
		
		The fact that each vertex in $\mod$ is connected to a vertex of the border in any test is consequence of the fact that each vertex in $\mod$ (then, in particular, each $x\in X$) is connected to a vertex in 
		$\bordof\mod=\left(\bordof{\mod_1}\setminus X\right) \cup \left(\bordof{\mod_2}\setminus X\right)$.%
		\qedhere
	\end{itemize}
\end{proof}
%%%%%%%%%%%%%%%%%%%%%%%%%%%%%%%%%%%%%%%%%%%%%%%%%%

%%%%%%%%%%%%%%%%%%%%%%%%%%%%%%%%%%%%%%%%%%%%%%%%%%
\begin{corollary}\label{cor:trans:comp}
	Let $\mod$ be a \mcomp, and $\tmod$ be \tracomp
	and
	$\left(\inpof{\mod}\cap \outpof{\tmod}\right)\supseteq X \neq \emptyset$.
	If $\mod$ is a transitory, then $\tmod\comp X \mod$ is so.
\end{corollary}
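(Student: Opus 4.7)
The plan is to reduce the statement to verifying the single extra clause from \Cref{def:netGCor} that distinguishes a \tracomp from a \mcomp: that every input is connected to some output in at least one test. Since the corollary is stated in the setting where $\tmod\comp X \mod$ is already a well-formed \mcomp (so the equivalent conditions of \Cref{thm:Mod:comp} hold), no further topological or orthogonality check is required. Abbreviate $\gS \coloneqq \tmod\comp X \mod$.

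The first step is the book-keeping. Since the sequential composition identifies the vertices of $X$ but introduces no new hyperedges, one has
\[
\inpof\gS = \inpof\tmod \cup (\inpof\mod\setminus X)
\quand
\outpof\gS = (\outpof\tmod\setminus X)\cup \outpof\mod,
\]
and every switching $\switch\in\switchset\gS$ is obtained as a free pair $(\switch_\tmod,\switch_\mod)\in\switchset\tmod\times\switchset\mod$, with the hyperedges of $\switch(\gS)$ being exactly the union of those of $\switch_\tmod(\tmod)$ and $\switch_\mod(\mod)$, the vertices of $X$ being shared. This is the fact I would rely on most: paths of the two side tests concatenate across shared vertices of $X$ into paths of $\switch(\gS)$.

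With this in hand, the remaining case analysis on an input $i\in\inpof\gS$ is straightforward. If $i\in\inpof\mod\setminus X$, transitoriness of $\mod$ furnishes a $\switch_\mod$ connecting $i$ to some $o\in\outpof\mod\subseteq\outpof\gS$, and any completion $\switch_\tmod$ yields the desired test. If instead $i\in\inpof\tmod$, transitoriness of $\tmod$ supplies a $\switch_\tmod$ connecting $i$ to some $o\in\outpof\tmod$: when $o\notin X$ we are done immediately since $o\in\outpof\gS$; when $o\in X$, the vertex $o$ is also an input of $\mod$, and transitoriness of $\mod$ provides a $\switch_\mod$ connecting $o$ to some $o'\in\outpof\mod$, whose concatenation with the path from $i$ to $o$ witnesses $i$ being connected to $o'\in\outpof\gS$ in the glued test. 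The only delicate point in the whole argument is the concatenation principle invoked in the second paragraph, but this is immediate from the fact that sequential composition merely glues interface vertices without adding hyperedges.
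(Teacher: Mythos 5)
Your proof is correct and follows essentially the same route as the paper's: a case analysis on whether an input of $\tmod\comp X \mod$ comes from $\tmod$ or from $\mod\setminus X$, using transitoriness of each piece and concatenating the two witnessing tests across the shared interface vertices of $X$. Your write-up is in fact somewhat cleaner than the paper's (which states the cases in a slightly garbled order), and your explicit remark that the composite is presupposed to be a \mcomp matches the paper's implicit appeal to \Cref{thm:Mod:comp}.
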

\begin{proof}
	In light of \Cref{thm:Mod:comp}, 
	it suffices to remark that if 
	$\mod$ is a \tracomp, 
	then every input $\ip\in\inpof{\mod}$ is connected to an output $\op\in\outpof\mod$ in a test $\switch_2(\mod)$.
	If $\op\in\outpof\mod$ we conclude.
	Otherwise $\op\notin\outpof\mod$ and $\op=x\in X$.
	Since $\tmod$ is \tracomp, then by definition there is a test $\switch_1(\tmod)$ in which 
	$x$ is connected to a vertex $\op\in\outpof\tmod$.
	We conclude since 
	each input of $\tmod\comp X \mod$ is either an input of $\tmod$, or an input of $\mod$;
	and in the latter case we have a switching $\switch$ 
	defined as the union of $\switch_1$ and $\switch_2$ such that $\ip$ is connected to an output $\op\in\outpof\tmod\subseteq\outpof{\tmod\comp X \mod}$.
\end{proof}
%%%%%%%%%%%%%%%%%%%%%%%%%%%%%%%%%%%%%%%%%%%%%%%%%%

%%%%%%%%%%%%%%%%%%%%%%%%%%%%%%%%%%%%%%%%%%%%%%%%%%%%
\begin{proposition}\label{thm:complexity}
	We can check if a \mcomp $\mod'$ expands a \mcomp $\mod$
	in polynomial time with respect to 
	$\sizeof{\outpof{\mod'}} + \sizeof{\behof{\mod'}} + \sizeof{\behof{\mod}}$.
\end{proposition}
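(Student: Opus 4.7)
The plan is to reduce the decision problem to the three conditions of \Cref{thm:Mod:comp} and then argue that each one can be checked directly from the data $X$, $\behof\mod$ and $\behof{\mod'}$ in polynomial time, without ever inspecting the internal links of the two \mcomps. Throughout, I treat $\sizeof{\behof\mod}$ as the bit-size of the list of partitions (so that the cardinalities $\sizeof{\bordof\mod}$ are implicitly bounded by it, since every partition is a partition of $\bordof\mod$), and recall that $X\subseteq \inpof\mod\cap \outpof{\mod'}$, whence $\sizeof X\le \sizeof{\outpof{\mod'}}$.

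First I would compute, in one linear pass over each partition, the two restricted behaviors $\restr{\behof\mod}{X}$ and $\restr{\behof{\mod'}}{X}$; this costs $O(\sizeof{\behof\mod}+\sizeof{\behof{\mod'}})$. The first condition, $X\neq \bordof\mod\cup\bordof{\mod'}$, is a single comparison of cardinalities, hence constant time once the borders are read off the partitions.

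For the second condition, weak orthogonality of the two restricted behaviors, I would iterate over every pair $(p,q)\in \restr{\behof\mod}{X}\times \restr{\behof{\mod'}}{X}$ and build the incidence graph $\gG(p,q)$ from \Cref{def:ort}. This graph has at most $2\sizeof X$ vertices and $\sizeof X$ edges, and testing acyclicity (via a standard union-find or DFS traversal) is near-linear in $\sizeof X$. The whole pass therefore costs $O\bigl(\sizeof{\behof\mod}\cdot\sizeof{\behof{\mod'}}\cdot \sizeof{X}\bigr)$, which is polynomial in the three stated parameters.

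For the third condition, I would exploit that the behavior of a \mcomp already records exactly the connectivity of border vertices in its tests. So for each $x\in X$ I check the disjunction
\emph{either} every partition $p\in \behof\mod$ has the block containing $x$ meeting $\bordof\mod\setminus X$,
\emph{or} every partition $q\in \behof{\mod'}$ has the block containing $x$ meeting $\bordof{\mod'}\setminus X$;
both branches are a linear scan over the partitions. Repeating for the $\sizeof X$ values of $x$ gives cost $O\bigl(\sizeof X\cdot(\sizeof{\behof\mod}+\sizeof{\behof{\mod'}})\bigr)$. Summing the three contributions yields a polynomial bound in $\sizeof{\outpof{\mod'}}+\sizeof{\behof{\mod'}}+\sizeof{\behof\mod}$, which is exactly the statement.

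The only subtle point—and what I expect to be the main obstacle—is justifying that no information beyond the two behaviors and the set $X$ is needed. This is precisely where \Cref{thm:Mod:comp} is essential: it converts the structural requirement ``$\mod\lseq_X\mod'$ is a \mcomp'' into three conditions that are purely about restrictions of behaviors and connectivity patterns inside each test, both of which are directly readable from $\behof\mod$ and $\behof{\mod'}$. Once this is observed, the complexity analysis is just the bookkeeping above.
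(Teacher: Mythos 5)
Your proposal is correct and follows essentially the same route as the paper: both reduce the problem via \Cref{thm:Mod:comp} to conditions readable off the behaviors, with the dominant cost being the $\sizeof{\behof{\mod}}\times\sizeof{\behof{\mod'}}$ pairwise acyclicity tests on incidence graphs of size linear in $\sizeof{\outpof{\mod'}}$. The paper's own proof only spells out this weak-orthogonality check, whereas you also verify the border and connectivity conditions explicitly; that extra bookkeeping is welcome but does not change the argument.
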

\begin{proof}
	To check if $\behof{\mod'}\wperp \restr{\behof{\mod}}{\outpof \link}$ requires $\sizeof{\behof{\mod'}} \times \sizeof{\restrp{\behof\mod}{\outpof{\mod'}}}\leq \sizeof{\behof{\mod'}} \times \sizeof{\behof{\mod}}$ orthogonality tests on partitions.
	Each test requires to build the graph $\gG(p,q)$ (linear on $\sizeof{\outpof\link}$) and  check graph acyclicity since graph traversal is linear in $\sizeof{\vertices[\gG(p,q)]}+\sizeof{\uedge[\gG(p,q)]}\leq 2\sizeof{\outpof\link}$ and $\sizeof{\uedge[\gG(p,q)]}=\sizeof{\outpof\link}$, see \cite{cormen01introduction}.
\end{proof}
%%%%%%%%%%%%%%%%%%%%%%%%%%%%%%%%%%%%%%%%%%%%%%%%%%%%

%%%%%%%%%%%%%%%%%%%%%%%%%%%%%%%%%%%%%%%%%%%%%%
%%%%%%%%%%%%%%%%%%%%%%%%%%%%%%%%%%%%%%%%%%%%%%
%%%%%%%%%%%%%%%%%%%%%%%%%%%%%%%%%%%%%%%%%%%%%%
\section{Modelling behaviors beyond the scope of $\MLL$-\mllpss}\label{sec:examples}
%%%%%%%%%%%%%%%%%%%%%%%%%%%%%%%%%%%%%%%%%%%%%%
%%%%%%%%%%%%%%%%%%%%%%%%%%%%%%%%%%%%%%%%%%%%%%
%%%%%%%%%%%%%%%%%%%%%%%%%%%%%%%%%%%%%%%%%%%%%%

In this section we recall the definition of two classes of generalized multiplicative connectives from \cite{acc:mai:20}, showing how the corresponding links can be used to define methods 
whose behaviors exhibit unexpected context-sensitive and non-linear characteristics in a multiplicative setting.

%%%%%%%%%%%%%%%%%%%%%%%%%%%%%%%%%%%%%%%%%%%%%%%%%%
\begin{definition}\label{def:gircon}
	Let $n=uv$ be the product of two prime numbers $u,v\in \N$.
	We define a \defn{basic partition with $u$ blocks of $v$ elements} to be a partition $p\in\partofn$ such that each block $\gamma \in p$  is either of the form
	$\gamma
	=
	\block{i, i+1, \dots, i+v-1}
	$
	if
	$i+v< n$, 
	or of the form
	$
	\gamma
	=
	\block{i, \dots, n, 1, 2, \dots, i+v-(n+1)}
	$
	for a $i \in\intset1n$ otherwise.
	We denote by $\sbp u v\subset\partofn$ the set of basic partitions with $u$ blocks of $v$ elements and $\psbp uv$ its orthogonal set of partitions.
	
	For every $n=uv$ we define the following sets of partitions of the set $\intset0n$:
	$$
	\begin{array}{l@{\;=\;}l}
		\gsbp uv(0,1,\dots, n)
		&
		\bigcup_{k=1}^{u}\Set{p_k=\part{\gamma_{p_k}, \block{i_1},\dots, \block{i_{n-u}}} \mid \restr{p_k}{\set{1,\dots, n}}\in\psbp uv \mbox{ and } 0\in \gamma_{p_k}  }
		\\
		\gsbpp uv(0,1,\dots, n)
		&
		\bigcup_{k=1}^{u}\bigcup_{j=1}^{v}
		\Set{p_k^j=\part{\gamma_1, \dots, \gamma_v} \mid \restr{p_k^j}{\set{1,\dots, n}}\in\sbp uv \mbox{ and } 0\in \gamma_j  }
	\end{array}
	$$
	and we define the following \defn{Girard's types}: 
	$\gircon_{u,v}\coloneqq	\tuple{uv, 1, \gsbp uv}	$
	and
	$\cneg\gircon_{u,v}	\coloneqq	\tuple{uv, 1, \gsbpp uv}$.
	\footnote{In \cite{acc:mai:20} it has been shown that there is no \mllmod $\mod$ containing only $\lpar$- and $\ltens$-link such that $\behof{\mod}=\behof{\gircon_{u,v}}$ or $\behof{\mod}=\behof{\cneg\gircon_{u,v}}$.}
\end{definition}
%%%%%%%%%%%%%%%%%%%%%%%%%%%%%%%%%%%%%%%%%%%%%%%%%%

%%%%%%%%%%%%%%%%%%%%%%%%%%%%%%%%%%%%%%%%%%%%%%%%%%
\begin{remark}\label{rem:cyclic}
	
	The behavior 
	$\gsbp uv$ is the same of the intersection of the behavior of specific \DNF-formulas,
	while 
	$\gsbpp uv$ is the same of the union of behavior of specific \CNF-formulas
	(see \Cref{fig:basic}).
	More precisely,
	$$
	\gsbp uv= \bigcap_{\tau \in \cyofn} \swof{\dnf(i_{\tau(1)}, \dots,i_{\tau(n)})}
	\qquad \mbox{and}\qquad
	\gsbpp uv= \bigcup_{\tau \in \cyofn} \swof{\cnf(i_{\tau(1)}, \dots,i_{\tau(n)})}
	\qquad
	\mbox{where}
	$$
	\begin{itemize}
		\item 
		$\swof{\dnf(1, \dots,n)}$ be the behavior of the \ms representing the formula tree of the disjunctive normal form 
		formula $\dnf(1, \dots,n)=\big(a_1\ltens \cdots \ltens a_v\big) \lpar \cdots \lpar \big(a_{n-v+1} \ltens \cdots \ltens a_n\big)$ ;
		
		\item 
		$\swof{\cnf(1, \dots,n)}$ be the behavior of the \ms representing the formula tree of the 
		conjunctive normal form
		formula $\cnf(a_1, \dots,a_n)=\big(a_1\lpar \cdots \lpar a_v\big) \ltens \cdots  \ltens \big(a_{n-v+1} \lpar \cdots \lpar a_n\big)$ ;
		
		\item 
		$\cyofn$ be the set of cyclic permutations over the set $\intset1n$ (assuming the standard order on $\N$).
	\end{itemize}
	
\end{remark}
%%%%%%%%%%%%%%%%%%%%%%%%%%%%%%%%%%%%%%%%%%%%%%%%%%

%%%%%%%%%%%%%%%%%%%%%%%%%%%%%%%%%%%%%%%%%%%%%%%%%%
\begin{figure}[t]
	\resizebox{\textwidth}{!}{
		\begin{tabular}{|c|c|}
			\hline
			$\gircon_{2,2}^\perp$ & $\gircon_{2,2}$\\
			\hline
			$\begin{array}{c@{}c@{}c@{}c@{}c@{}c@{}c}
				\vi1_1 &&\vi2_2 & & \vi3_3 &  & \vi4_4
				\\[.5em]
				&\Gtens1&           &        &                   &\Gtens2
				\\[.5em]
				&       &           &\Gpar1&                   &
				\\
				&       &           &\vo0
			\end{array}    
			\pswires{i1/Gtens1.L,i2/Gtens1.R,i3/Gtens2.L,i4/Gtens2.R}
			\pswires{Gtens1.O/Gpar1.L,Gtens2.O/Gpar1.R,Gpar1.O/o0}
			$
			\quad
			$
			\begin{array}{c@{}c@{}c@{}c@{}c@{}c@{}c}
				\vi1_4 &&\vi2_1 & & \vi3_2 &  & \vi4_3
				\\[.5em]
				&\Gtens1&           &        &                   &\Gtens2
				\\[.5em]
				&       &           &\Gpar1&                   &
				\\
				&       &           &\vo0
			\end{array}    
			\pswires{i1/Gtens1.L,i2/Gtens1.R,i3/Gtens2.L,i4/Gtens2.R}
			\pswires{Gtens1.O/Gpar1.L,Gtens2.O/Gpar1.R,Gpar1.O/o0}
			$
			&
			$\begin{array}{c@{}c@{}c@{}c@{}c@{}c@{}c}
				\vi1_1 &&\vi2_2 & & \vi3_3 &  & \vi4_4
				\\[.5em]
				&\Gpar1&           &        &                   &\Gpar2
				\\[.5em]
				&       &           &\Gtens1&                   &
				\\
				&       &           &\vo0
			\end{array}    
			\pswires{i1/Gpar1.L,i2/Gpar1.R,i3/Gpar2.L,i4/Gpar2.R}
			\pswires{Gpar1.O/Gtens1.L,Gpar2.O/Gtens1.R,Gtens1.O/o0}
			$
			\qquad\qquad
			$\begin{array}{c@{}c@{}c@{}c@{}c@{}c@{}c}
				\vi1_4 &&\vi2_1 & & \vi3_2 &  & \vi4_3
				\\[.5em]
				&\Gpar1&           &        &                   &\Gpar2
				\\[.5em]
				&       &           &\Gtens1&                   &
				\\
				&       &           &\vo0
			\end{array}    
			\pswires{i1/Gpar1.L,i2/Gpar1.R,i3/Gpar2.L,i4/Gpar2.R}
			\pswires{Gpar1.O/Gtens1.L,Gpar2.O/Gtens1.R,Gtens1.O/o0}
			$
			\\
			\hline
			$ \Set{\part{\block{\op,\ip_1,\ip_2}, \block{\ip_3,\ip_4}}, 
				\part{\block{\op,\ip_3,\ip_4}, \block{\ip_1,\ip_2}}}$ 
			&
			$ \Set{\part{\block{\op,\ip_1,\ip_3}, \block{\ip_2},\block{\ip_4}}, 
				\part{\block{\op,\ip_2,\ip_4}, \block{\ip_1},\block{\ip_3}},
				\part{\block{\op,\ip_1,\ip_4}, \block{\ip_2},\block{\ip_3}}, 
				\part{\block{\op,\ip_2,\ip_3}, \block{\ip_1},\block{\ip_4}}
			}$ 
			\\
			$\cup $ 
			&
			$\cap$
			\\
			$\Set{\part{\block{\op,\ip_1,\ip_4}, \block{\ip_2,\ip_3}}, 
				\part{\block{\op,\ip_2,\ip_3}, \block{\ip_1,\ip_4}}}$
			&
			$ \Set{\part{\block{\op,\ip_1,\ip_3}, \block{\ip_2},\block{\ip_4}}, 
				\part{\block{\op,\ip_2,\ip_4}, \block{\ip_1},\block{\ip_3}},
				\part{\block{\op,\ip_1,\ip_2}, \block{\ip_3},\block{\ip_4}}, 
				\part{\block{\op,\ip_3,\ip_4}, \block{\ip_1},\block{\ip_2}}
			}$ 
			\\
			&=
			\\
			&
			$ \Set{
				\part{\block{\op,\ip_1,\ip_3}, \block{\ip_2},\block{\ip_4}}, 
				\part{\block{\op,\ip_2,\ip_4}, \block{\ip_1},\block{\ip_3}}
			}$
			\\\hline
		\end{tabular}
	}
	\caption{Connective  $\gircon_{2,2}^\perp$ and its dual connective $\gircon_{2,2}$ interpreted 
		as the union of the behaviors of {\DNF} formula-trees
		and 
		as the intersection of the behaviors of {\CNF} formula-trees
		respectively.
	}
	\label{fig:basic}
\end{figure}
%%%%%%%%%%%%%%%%%%%%%%%%%%%%%%%%%%%%%%%%%%%%%%%%%%

%%%%%%%%%%%%%%%%%%%%%%%%%%%%%%%%%%%%%%%%%%%%%%%%%%
\begin{theorem}[\cite{acc:mai:20}]\label{thm:nonDec}
	There is no \ms $\gS$ over the signature $\set{\lpar,\ltens}$ such that $\behof{\gS}=\behof{\gircon_{u,v}}$ or $\behof{\gS}=\behof{\cneg\gircon_{u,v}}$
	for any $u,v\in\N$  prime numbers.
\end{theorem}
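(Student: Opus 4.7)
The plan is to argue by contradiction: suppose some multiplicative structure $\gS$ over $\set{\lpar,\ltens}$ satisfies $\behof{\gS}=\behof{\gircon_{u,v}}$ (the other case is dual). Since both $\lpar$ and $\ltens$ are binary with one output each, linearity plus a straightforward Euler count on vertices versus link-endpoints forces such a $\gS$ with $n=uv$ inputs and one output to be the formula tree of some $\MLL$-formula $F(a_1,\ldots,a_n)$ in which each atom occurs exactly once. In particular, $\behof{\gS}=\swof{F}$, so the problem reduces to showing that no $\MLL$-formula $F$ satisfies $\swof{F}=\gsbp{u}{v}$.

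The second step is a recursive description of $\swof{F}$ from the root of $F$. If $F=G\ltens H$ then each $p\in\swof{F}$ is obtained by fusing one partition of $\swof{G}$ with one of $\swof{H}$ along their roots; if $F=G\lpar H$ then one instead attaches the root to either $G$'s root-block or $H$'s root-block, leaving the other as a separate block. In both cases the non-trivial split of the $n$ leaves into the sub-leaves $A_G\sqcup A_H$ induced by the root of $F$ is visible in every $p\in\swof{F}$: any non-singleton block of $p$ that meets both $A_G$ and $A_H$ must contain the root, and conversely the restriction $\restr{p}{A_G}$ and $\restr{p}{A_H}$ are forced to come from $\swof{G}$ and $\swof{H}$ respectively.

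Third, and this is where the real work lies, I would exploit the cyclic symmetry of $\gsbp{u}{v}$ from Remark~\ref{rem:cyclic}: it is the intersection $\bigcap_{\tau\in\cyofn}\swof{\dnf(i_{\tau(1)},\ldots,i_{\tau(n)})}$, whose $u$ elements are precisely the cyclic rotations of the block of size $v$ containing the root. Given any non-trivial leaf-split $A_G\sqcup A_H=\intset1n$ coming from the root of $F$, a pigeonhole/rotation argument exhibits some $p_k\in\gsbp{u}{v}$ whose unique non-singleton block of size $v-1$ inputs properly straddles $A_G$ and $A_H$ without being reducible to a root-only intersection. This contradicts the rigidity established in the second step and rules out every candidate root for $F$, hence every formula tree. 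The dual case $\behof{\gS}=\behof{\cneg\gircon_{u,v}}$ is handled by the same scheme, using that $\gsbpp{u}{v}$ is the \emph{union} of the $\swof{\cnf(\ldots)}$'s over cyclic permutations, so that an analogous argument rules out any single $\cnf$-shaped formula tree. The hard part I anticipate is the uniform straddling lemma: one must show that no binary split of $\intset1n$ can cut cleanly through every cyclic block of size $v$, and this should follow from a case analysis on whether $\sizeof{A_G}$ is a multiple of $v$, together with the primality of $u$ and $v$ to avoid degenerate alignments.
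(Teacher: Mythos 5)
You should first note that the paper does not actually prove \Cref{thm:nonDec}: it is imported from \cite{acc:mai:20} with no argument given here, so there is no in-paper proof to compare against and your sketch has to stand on its own. Its overall architecture (reduce to formula trees, analyse the root connective, contradict the rigidity of $\gsbp{u}{v}$) is the right shape for a non-decomposability argument, and your Step~2 is correct. Step~1, however, has a small gap: the Euler count fixes the number of links but does not by itself exclude disconnected structures or directed cycles of $\ltens$-/$\lpar$-links feeding one another (behaviours are defined for arbitrary multiplicative structures, not only correct ones), so you need an auxiliary observation about $\gsbp{u}{v}$ — e.g.\ that no two inputs lie in a common block of every partition — to rule these out before you may assume $\gS$ is a formula tree.

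The genuine gap is Step~3. Every non-singleton block of every partition in $\gsbp{u}{v}$ \emph{contains the output}: each $p_k$ consists of one block $\gamma_{p_k}\ni 0$ with $u$ inputs (not $v-1$, as you write) together with $n-u$ singletons. Hence a lemma of the form ``some block straddles $A_G$ and $A_H$ without containing the root'' is vacuous here — there is no such block to exhibit — and the qualifier ``without being reducible to a root-only intersection'' is precisely the point left undefined. The actual obstruction in the $\ltens$-root case is a \emph{free-recombination} property: if $F=G\ltens H$ then $\swof{F}$ contains the merge of \emph{every} pair in $\swof{G}\times\swof{H}$, so the trace of the output's block on $A_G$ and its trace on $A_H$ must vary independently; in $\gsbp{u}{v}$ these traces are perfectly correlated, since the output's block ranges only over the $u$ prescribed transversals, and any nontrivial split of $\intint{n}$ cuts at least two of them differently, so recombination produces a partition outside $\gsbp{u}{v}$. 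That correlation argument, not a straddling one, is what must be made precise, and it is exactly the step your sketch defers (``the hard part I anticipate''). The dual case is also not quite symmetric to the primal one: for $\behof{\cneg\gircon_{u,v}}$ blocks avoiding the output do exist, so a containment argument against a $\lpar$-root works, but the $\ltens$-root subcase again needs recombination rather than a pigeonhole on cyclic blocks. As it stands the proposal is a credible plan whose central lemma is both misstated and unproved.
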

%%%%%%%%%%%%%%%%%%%%%%%%%%%%%%%%%%%%%%%%%%%%%%%%%%

%%%%%%%%%%%%%%%%%%%%%%%%%%%%%%%%%%%%%%%%%%%%%%%%%%
\begin{definition}
	We generalize the components of bipoles (see \Cref{def:concBip}) as follows:
	
	\begin{itemize}
		\item 
		A \defn{generalized body} is 
		a \mcomp made of 
		a $\lmtens$ collecting the outputs of a \ms representing a \CNF-formula (i.e., bodies) or $\gsbpp uv$-links.
		
		\item 
		A \defn{generalized synchronizer} is 
		a \mcomp made of 
		a $\lmparp$ collecting the outputs of a \ms representing a \DNF-formula (i.e., synchronizers) or $\gsbp u2$-links.
		
	\end{itemize}
	
	A \defn{generalized bipole} is 
	a \mcomp made of 
	a generalized synchronizer $\rmod$ collecting the outputs of 
	certain generalized bodies $\cmod_1,\ldots,\cmod_n$  and headers $\hmod_1, \dots, \hmod_n$
	whose structure is similar to the one in \Cref{eq:bipLink}, but where no output $F$ occurs (thanks to the $\lmparp$ in the generalized synchronizer).
\end{definition}
\begin{remark}
	Headers and Generalized bodies and synchronizes are \mmods.
\end{remark}
%%%%%%%%%%%%%%%%%%%%%%%%%%%%%%%%%%%%%%%%%%%%%%%%%%

%
%Thanks to the following remark, we can define generalized bodies and synchronizers using Girard's types.
%\begin{remark}
%	If $\cmod_1$ and $\cmod_2$ are bodies, then also $\ltens \comp{\inpof \ltens} (\cmod_1\parallel\cmod_2)$ is a body. 
%	Similarly, 
%	if $\rmod_1$ and $\rmod_2$ are synchronizers, then also $(\rmod_1\parallel\rmod_2)\comp{\inpof \lpar}\lpar$ is a synchronizer.
%\end{remark}

The following result is a corollary of \Cref{thm:nonDec}.
\begin{corollary}
	There are generalized bipoles whose behavior is different from any behavior of a $\MLL$-\mllps.
\end{corollary}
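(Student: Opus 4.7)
The plan is to construct an explicit generalized bipole $\mathfrak{B}$ whose behavior coincides, up to renaming a single border vertex, with $\behof{\cneg\gircon_{u,v}}$ for some prime pair $u,v$, and then to invoke \Cref{thm:nonDec} to conclude. Fix primes $u,v$ (say $u=v=2$) and let $n=uv$. I would take $\mathfrak{B}$ to consist of three parts: a generalized body $\cmod$ made of a single $\gsbpp{u}{v}$-link (a $\cneg\gircon_{u,v}$-link) with inputs $i_1,\dots,i_n$ and output $o_b$, post-composed with an identity $\ltens_1$-link producing $o_b'$; a header $\hmod$ consisting of one $\axrule$-link producing a pair $o,\cneg o$, with $\cneg o$ passing through an identity $\ltens_1$-link to yield $\cneg o'$; and a generalized synchronizer $\rmod$ made of a $\ltens_2$-link combining $o_b'$ and $\cneg o'$ into an internal vertex $w$, capped by a $\lpar^\bullet_1$-link absorbing $w$. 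By the definitions of its constituent parts this is a valid generalized bipole, and a routine verification via \Cref{thm:Mod:comp,thm:Mod:char} (together with the fact that every constituent link is itself a \mcomp) confirms that $\mathfrak{B}$ is a \mcomp; its border is $\{i_1,\dots,i_n,o\}$.

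Next I would compute $\behof{\mathfrak{B}}$ explicitly. The only switching choice available in $\mathfrak{B}$ is the partition $\pi\in\behof{\gsbpp{u}{v}}$ assigned to the $\cneg\gircon_{u,v}$-link; every other link admits a unique partition. Tracing connected components in the test determined by each switching shows that the chain $o_b\sim o_b'\sim w\sim\cneg o'\sim\cneg o\sim o$ is present in every test, with the respective edges coming from the identity $\ltens_1$ on $o_b$, the $\ltens_2$ hyperedge $\{o_b',\cneg o',w\}$, the identity $\ltens_1$ on $\cneg o$, and the $\axrule$ hyperedge $\{o,\cneg o\}$. Hence the block of $\pi$ containing $o_b$ is merged with $\{o\}$ when restricted to the border, while each other block of $\pi$ contains only $i_j$'s and survives unchanged. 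The border partition induced by the switching is therefore $\pi$ with $o_b$ relabeled as $o$, so $\behof{\mathfrak{B}}=\behof{\cneg\gircon_{u,v}}$ under this renaming.

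Finally I would invoke \Cref{thm:nonDec}: if some $\MLL$-\mllps $\prs$ realized $\behof{\mathfrak{B}}=\behof{\cneg\gircon_{u,v}}$, then absorbing each of its $\axrule$-links by identifying its two outputs into a single internal vertex --- an operation that preserves the behavior on the remaining border --- would yield a \ms over $\{\lpar,\ltens\}$ with the same behavior, contradicting the theorem. The main obstacle is this last absorption step: \Cref{thm:nonDec} is stated only for the signature $\{\lpar,\ltens\}$, whereas $\MLL$-\mllpss additionally contain $\axrule$-links, so one must explicitly verify that ax-identification preserves border behavior. This boils down to the observation that the only hyperedge contributed by an $\axrule$-link is the pair of its two outputs, whose connectivity effect is exactly the same as identifying those two vertices into one; once this reduction is in place, the corollary follows immediately from \Cref{thm:nonDec}.
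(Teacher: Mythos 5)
Your overall strategy coincides with the paper's: the paper offers no argument beyond asserting that the statement follows from \Cref{thm:nonDec}, and your proof instantiates that assertion with an explicit generalized bipole built around a single $\cneg\gircon_{u,v}$-link together with a direct computation showing that its behavior is $\gsbpp uv$ up to renaming the output vertex. The construction fits the definitions of generalized body, header and generalized synchronizer, and the connectivity chain you trace through the tests is correct. You also rightly flag the one point the paper glosses over, namely that \Cref{thm:nonDec} concerns \mss over the signature $\set{\lpar,\ltens}$ whereas an $\MLL$-\mllps may contain $\axrule$-links. Your proposed repair is, however, not sound exactly as stated: if \emph{both} outputs of an $\axrule$-link are inputs of other links, identifying them produces a vertex occurring as an input of two distinct hyperedges, which violates the linearity condition on hypergraphs, so the contracted object is not a \ms over $\set{\lpar,\ltens}$ and \Cref{thm:nonDec} cannot be invoked verbatim. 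The contraction does preserve test connectivity and hence the border behavior, so the idea is right, but to close the argument you need either a version of \Cref{thm:nonDec} that tolerates $\axrule$-links (as in \cite{acc:mai:20}, where the non-decomposability statement is about modules), or an explicit replacement of each offending $\axrule$-link by a unary identity link together with an argument that this does not change the class of realizable behaviors. This is a small but genuine gap in the bridging step; everything else matches the paper's intent and is, if anything, more complete than what the paper provides.
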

%%%%%%%%%%%%%%%%%%%%%%%%%%%%%%%%%%%%%%%%%%%%%%%%%%
\begin{remark}\label{rem:outp}
	In \cite{and:maz:03} the bipolar links have an additional output (the one labeled by the formula) 
	as shown in \Cref{eq:bipolarLink}
	because the syntax from \cite{and:maz:03} 
	is closer to the representation of methods we use in \Cref{eq:bipoleSynth}, where 
	the name of the applied method (the formula $F$ identifying the rule) occurs in the final sequent.
	In our syntax this superfluous information is discarded in the same way as in \Cref{eq:bipole}, where the formula $F$ does not occur in the conclusion.
	That is, the additional output $F$ in \Cref{eq:bipLink} would not occur in our generalized bipoles because in the definition of generalized synchronizes we use $\lmparp$ instead of $\lmpar$, allowing us to formally discard this output.
\end{remark}
%%%%%%%%%%%%%%%%%%%%%%%%%%%%%%%%%%%%%%%%%%%%%%%%%%%
%
%
%\begin{lemma}
%	Generalized bipoles are \mmods.
%\end{lemma}
%\begin{proof}
%	By induction, 
%\end{proof}

We conclude this section by providing two toy-examples 
describing the way a server manages access requests 
to critical sections.
The na\"ive idea behind these models is that 
if the vertex corresponding to a client is connected to the vertex corresponding to a resource, then
there is a configuration of the model in which only that specific client accesses the resource.

%%%%%%%%%%%%%%%%%%%%%%%%%%%%%%%%%%%%%%%%%%%%%%%%%%
\begin{figure}[t]
	\centering
	\resizebox{.9\textwidth}{!}{
		$
		\begin{array}{cc@{\hspace{-20pt}}c@{\quad}ccc}
			\quad\vr1_1& \vr2_2&  \\[-10pt]
			\virt1\\[25pt]
			& &\vGmod1{\quad\cneg \gircon_{2,2}\quad}\\[15pt]
			&&\Gparpone1\\
			&&&&\virt2
			\\[-10pt]
			&&& \vc1_1 &\vc2_2
		\end{array}
		\pswires{r1/Gmod1.150,r2/Gmod1.120}
		\pswires{Gmod1/Gparpone1}
		\psaxiom{Gmod1.60}{c1}{2}{}
		\psaxiom{Gmod1.30}{c2}{2}{}
		\linkbox{irt1}{irt2}
		$
		\begin{tabular}{c}
			can be replaced
			\\
			by both
		\end{tabular}
		$
		\begin{array}{c@{}ccc@{\qquad}ccc}
			&\vr1_1& \vr2_2&  \\[-10pt]
			\virt1
			\\[20pt]
			&\Gtens1 && \Gtens2
			\\
			&&\Gparptwo1\\[5pt]
			&&&&\virt2
			\\[-10pt]
			&&& \vc1_1 &\vc2_2
		\end{array}
		\pswires{r1/Gtens1.L,r2/Gtens2.L,Gtens1.O/Gparptwo1.L,Gtens2.O/Gparptwo1.R}
		\psaxiom{Gtens1.R}{c1}{1}{}
		\psaxiom{Gtens2.R}{c2}{1}{}
		\linkbox{irt1}{irt2}
		$
		and 
		$
		\begin{array}{c@{}ccc@{\qquad}cc@{\quad}c}
			&\vr1_1& \vr2_2&  \\[-10pt]
			\virt1
			\\[20pt]
			&\Gtens1 && \Gtens2
			\\
			&&\Gparptwo1\\[5pt]
			&&&&\virt2
			\\[-10pt]
			&&& \vc1_1 &\vc2_2
		\end{array}
		\pswires{r1/Gtens2.L,r2/Gtens1.L,Gtens1.O/Gparptwo1.L,Gtens2.O/Gparptwo1.R}
		\psaxiom{Gtens1.R}{c1}{.8}{}
		\psaxiom{Gtens2.R}{c2}{1}{}
		\linkbox{irt1}{irt2}
		$
	}
	\caption{A \ms representing a non-deterministic application of two concurrent methods.}
	\label{fig:concurrentResources}
\end{figure}
%%%%%%%%%%%%%%%%%%%%%%%%%%%%%%%%%%%%%%%%%%%%%%%%%%

%%%%%%%%%%%%%%%%%%%%%%%%%%%%%%%%%%%%%%%%%%%%%%%%%%
\begin{figure}[t]
	\centering
	\resizebox{.9\textwidth}{!}{
		$
		\begin{array}{cc@{\hspace{-10pt}}c@{\hspace{-10pt}}ccc}
			\quad\vr1_1& \vr2_2 & & \vr3_3 &\vr4_4 \\[-10pt]
			\virt1\\[10pt]
			& &\vGmod1{\quad \gircon_{2,2}\quad}\\[5pt]
			&&&&\Gtensptwo1
			\\
			&&&&& \quad\virt2
			\\[-10pt]
			&&&&& \vc1
		\end{array}
		\pswires{r1/Gmod1.150,r2/Gmod1.120,r3/Gmod1.60,r4/Gmod1.30}
		\pswires{Gmod1/Gtensptwo1.L}%,Gtens1/Gparp1}
	\psaxiom{Gtensptwo1.R}{c1}{2}{}
	\linkbox{irt1}{irt2}
	$
	\begin{tabular}{c}
		can replace
		%             \\
		%             the intersection of
	\end{tabular}
	$
	%        \Behof{
		\begin{array}{ccc@{\hspace{-5pt}}c@{\hspace{-5pt}}cccccc}
			\quad
			\vr1_1&& \vr2_2 && \vr3_3 &&\vr4_4 \\[-10pt]
			\virt1\\[10pt]
			&\Gpar1 &&&&\Gpar2\\[5pt]
			&&&\Gtens3
			\\
			&&&&&\Gtensptwo1
			%             \\
			%             &&&&&\Gparp1
			\\
			&&&&&&& \quad\virt2
			\\[-10pt]
			&&&&&&& \vc1
		\end{array}
		\psaxiom{Gtensptwo1.R}{c1}{2}{}
		\pswires{r1/Gpar1.L,r2/Gpar1.R,r3/Gpar2.L,r4/Gpar2.R}%,Gtens1/Gparp1}
	\pswires{Gpar1.O/Gtens3.L,Gpar2.O/Gtens3.R,Gtens3.O/Gtensptwo1.L}
	\linkbox{irt1}{irt2}
	%    	}
$
or 
$
%        \Behof{
	\begin{array}{ccc@{\hspace{-5pt}}c@{\hspace{-5pt}}cccccc}
		\quad
		\vr1_1&& \vr2_2 && \vr3_3 &&\vr4_4 \\[-10pt]
		\virt1\\[10pt]
		&\Gpar1 &&&&\Gpar2\\[5pt]
		&&&\Gtens3
		\\
		&&&&&\Gtensptwo1
		\\
		&&&&&&& \quad\virt2
		\\[-10pt]
		&&&&&&& \vc1
	\end{array}
	\psaxiom{Gtensptwo1.R}{c1}{2}{}
	\pswires{r1/Gpar1.L,r2/Gpar2.L,r3/Gpar2.R}%Gtens1/Gparp1}
\pswire{r4}{Gpar1.R}{.5}{}
\pswires{Gpar1.O/Gtens3.L,Gpar2.O/Gtens3.R,Gtens3.O/Gtensptwo1.L}
\linkbox{irt1}{irt2}
%    	}
$}
\vspace{-10pt}
\caption{A link representing a method application with a dependent choice.}
\label{fig:dependentChoice}
\end{figure}
%%%%%%%%%%%%%%%%%%%%%%%%%%%%%%%%%%%%%%%%%%%%%%%%%%

%%%%%%%%%%%%%%%%%%%%%%%%%%%%%%%%%%%%%%%%%%%%%%%%%%
\begin{example}[\basicul]
Consider a server receiving a request from two different clients $c_1$ and $c_2$ to access, at the same time, to one resource $r_1$ or $r_2$ in a critical section.
In this case the server can execute four different methods 
of the form 
$\prolog{r_i \colonminus c_j}$
each of which represents the resource $r_i$ being allocated to the client $c_j$ (for some $i,j\in \set{1,2}$).
Once any one of these methods is executed, the condition of critical section requires that no other user can access to this resource (until it is released).
Therefore, either the server authorizes $c_1$ to access $r_1$, and authorizes $c_2$ to access $r_2$, 
or the server authorizes $c_1$ to access $r_2$, and authorizes $c_2$ to access $r_1$.

In both cases, the two methods representing the clients accessing the resources can be applied concurrently, and the \mss on the right-hand side of \Cref{fig:concurrentResources} represent these two configurations.
Note that none of the two \mss fully capture the described configuration: each solution makes a choice about which client has access to which resource since, and this kind of choices are beyond the scope of multiplicative linear logic.
Using the \basicul ${\cneg \gircon_{2,2}}$ we can define the \ms in the left-hand side of \Cref{fig:concurrentResources},
whose behavior is the same of the union of the two \mss describing the two possible choices (see \Cref{rem:cyclic}).
That is, this the ${\cneg \gircon_{2,2}}$-link can be interpreted as a synchronizer allowing such a concurrent choice.
\end{example}
%%%%%%%%%%%%%%%%%%%%%%%%%%%%%%%%%%%%%%%%%%%%%%%%%%

%%%%%%%%%%%%%%%%%%%%%%%%%%%%%%%%%%%%%%%%%%%%%%%%%%
\begin{example}[\basicil]
Consider a server receiving a request from a single client $c$ to access the set of resources $\set{r_1,r_2,r_3,r_4}$
with goal of either collect $r_1$ and $r_3$ (that is, to apply the method $\prolog{c\colonminus r_1, r_3}$), or to collect $r_2$ and $r_4$ (that is, to apply the method $\prolog{c\colonminus r_2, r_4}$).
Because of our constraints, the server can
either
grant access to a resource in $\set{r_1,r_2}$ and one in $\set{r_3,r_4}$,
or 
grant access to a resource in $\set{r_1,r_4}$ and one in $\set{r_2,r_3}$.
These two different solutions are represented by the \mss in the right-hand side of \Cref{fig:dependentChoice}.
However, if we consider singularly each \ms, it models more permissive configurations. 
For example, the right-most \ms admits a test in which $c$ is connected to $r_1$ and $r_4$, which is not a configuration we want to allow.
Indeed, the configurations we want to allow are exactly the configurations which are valid in both \mss.
However, using 
the \basicil ${\gircon_{2,2}}$ we can define the \ms in the left of \Cref{fig:dependentChoice},
whose behavior is the same of the intersection of the two \mss on the right-hand side:
the client $c$ can only access at the same time to either $r_1$ and $r_3$, or $r_2$ and $r_4$.
\end{example}
%%%%%%%%%%%%%%%%%%%%%%%%%%%%%%%%%%%%%%%%%%%%%%%%%%

%%%%%%%%%%%%%%%%%%%%%%%%%%%%%%%%%%%%%%%%%%%%%%%%%%
%%%%%%%%%%%%%%%%%%%%%%%%%%%%%%%%%%%%%%%%%%%%%%%%%%
%%%%%%%%%%%%%%%%%%%%%%%%%%%%%%%%%%%%%%%%%%%%%%%%%%
\section{Conclusion}\label{sec:conc}
%%%%%%%%%%%%%%%%%%%%%%%%%%%%%%%%%%%%%%%%%%%%%%%%%%
%%%%%%%%%%%%%%%%%%%%%%%%%%%%%%%%%%%%%%%%%%%%%%%%%%
%%%%%%%%%%%%%%%%%%%%%%%%%%%%%%%%%%%%%%%%%%%%%%%%%%

In this paper, we extended the multiplicative fragment of the logic programming framework studied in \cite{and:01,and:02,and:maz:03,fou:mog04,hae:fag:sol:07}. 
Within this framework, as  main novelty, we offered a computational interpretation of the generalized connectives discussed in \cite{mai:17,acc:mai:20}. 
These connectives were initially introduced in early works on linear logic, but prior to this work, they had not been given a concrete computational interpretation:
they cannot be expressed using combinations of the multiplicative connectives $\lpar$ and $\ltens$
and 
they describe ``locally additive'' behaviors \cite{lau-mai:08,abr-mai:2016} such as non-deterministic or conditional choices.
It is worth noting that the methodology used to define our framework appears to align with the definition of the basic building block of the \emph{transcendental syntax} \cite{gir:trans,eng:sei:trans} and its extensions \cite{eng:sei:stellar:Ar,eng:sei:stellar,eng:sei:tile}.

%%%%%%%%%%%%%%%%%%%%%%%%%%%%%%%%%%%%%%%%%%%%%%%%%%
%%%%%%%%%%%%%%%%%%%%%%%%%%%%%%%%%%%%%%%%%%%%%%%%%%
\myparagraph{Future works}
%%%%%%%%%%%%%%%%%%%%%%%%%%%%%%%%%%%%%%%%%%%%%%%%%%
%%%%%%%%%%%%%%%%%%%%%%%%%%%%%%%%%%%%%%%%%%%%%%%%%%
%
As observed in \Cref{rem:cyclic}, the behavior of a \basicul can be seen as the union of behaviors of bodies (see \Cref{fig:basic}). This allows us to replace any \basicul within the \multstr of any of these bodies, and preserving correctness. 
Leveraging this intuition, we could define a non-deterministic \emph{\unfold} operation on \basicul, returning their set of bodies. 
This operation can be further extended to \mss, providing a notion of expansion for multiplicative structures similar to the concept of \emph{Taylor expansion} in \emph{differential linear logic}~\cite{ehr:reg:dif}. 
Such an expansion, could be interpreted as representing specific instances of \emph{delayed choice}~\cite{bae:mau:delayed}: during proof (net) construction we do not need to specify which of the possible bodies we want to use in a specific step, but we can use a \basicul containing it instead, delaying this decision.
We also envision an extension of our model where links have probabilistic distributions on the set of switchings. In this setting \basiculs could be equipped with probability distribution functions, transforming the non-deterministic expansion operator into a probabilistic one. Consequently, multiplicative structures could be employed to model Bayesian networks \cite{pearl:book,maieli:lirmm-03271511}.
Additionally, we foresee the possibility of defining refinements of the \emph{attack trees} syntax with a linear treatment of resources but including specific non-deterministic choices \cite{mau:oos:attack,hor:mau:tiu:attack,trees}. 
Similarly, the linear constraints on the hypergraphs used in our model allow us to define a concurrent computational model with a more granular management of resource consumption, akin to what we experience in the management of critical sections in concurrent systems.
Another possible direction is to study modules for proof structures built using the graphical connectives from \cite{acc:hor:str:LICS20,acc:hor:str:LMCS,acc:hor:mau:str:time,acc:GPT1} to provide them with a computational meaning based on resources separation.

%%%%%%%%%%%%%%%%%%%%%%%%%%%%%%%%%%%%%%%%%%%%%%%%%%

%%%%%%%%%%%%%%%%%%%%%%%%%%%%%%%%%%%%%%%%%%%%%%%%%%
%%%%%%%%%%%%%%%%%%%%%%%%%%%%%%%%%%%%%%%%%%%%%%%%%%
%%%%%%%%%%%%%%%%%%%%%%%%%%%%%%%%%%%%%%%%%%%%%%%%%%
%\nocite{*}
\bibliographystyle{eptcs}
\bibliography{refMmod}
%%%%%%%%%%%%%%%%%%%%%%%%%%%%%%%%%%%%%%%%%%%%%%%%%%
%%%%%%%%%%%%%%%%%%%%%%%%%%%%%%%%%%%%%%%%%%%%%%%%%%
%%%%%%%%%%%%%%%%%%%%%%%%%%%%%%%%%%%%%%%%%%%%%%%%%%

%%%%%%%%%%%%%%%%%%%%%%%%%%%%%%%%%%%%%%%%%%%%%%%%%%
%%%%%%%%%%%%%%%%%%%%%%%%%%%%%%%%%%%%%%%%%%%%%%%%%%
%%%%%%%%%%%%%%%%%%%%%%%%%%%%%%%%%%%%%%%%%%%%%%%%%%
%%%%%%%%%%%%%%%%%%%%%%%%%%%%%%%%%%%%%%%%%%%%%%%%%%
\end{document}